\newcommand{\conv}{\operatorname{conv}}
\newcommand{\xc}{\operatorname{xc}}
\newcommand{\R}{\mathbb{R}}
\newcommand{\Z}{\mathbb{Z}}
\newcommand{\ocp}{\operatorname{ocp}}
\newcommand{\oct}{\operatorname{oct}}
\newcommand{\STAB}{\operatorname{STAB}}
\newcommand{\LP}{\operatorname{LP}}
\newcommand{\stab}{\STAB}
\newcommand{\zerovec}{\mathbf{0}}
\newcommand{\onevec}{\mathbf{1}}
\newcommand{\sub}{P}
\newcommand{\slack}{Q}
\renewcommand{\ge}{\geqslant}
\renewcommand{\le}{\leqslant}
\renewcommand{\@Opargbegintheorem}[4]{%
  #4\trivlist\item[\hskip\labelsep{#3#2\@thmcounterend}]}
\begin{document}
\title{Extended Formulations for Stable Set Polytopes of Graphs Without Two Disjoint Odd Cycles}

\author{Michele Conforti \inst{1} \and Samuel Fiorini \inst{2} \and Tony Huynh \inst{3} \and Stefan Weltge \inst{4}}

\authorrunning{M.~Conforti et al.}
\titlerunning{Extended Formulations for Stable Set Polytopes of OCP-1 Graphs}

\institute{
    Dipartimento di Matematica, Universit\`a degli Studi di Padova, Padova, Italy\\
    \email{conforti@math.unipd.it}
    \and
    D\'epartement de Math\'ematique, Universit\'e libre de Bruxelles, Brussels, Belgium\\
    \email{sfiorini@ulb.ac.be}
    \and
    School of Mathematics, Monash University, Melbourne, Australia\\
    \email{tony.bourbaki@gmail.com}
    \and
    Fakult\"at f\"ur Mathematik, Technische Universit\"at M\"unchen, Munich, Germany\\
    \email{weltge@tum.de}
}

\maketitle

\begin{abstract}
Let $G$ be an $n$-node graph without two disjoint odd cycles. The algorithm of Artmann, Weismantel and Zenklusen (STOC'17) for bimodular integer programs can be used to find a maximum weight stable set in $G$ in strongly polynomial time. Building on structural results characterizing sufficiently connected graphs without two disjoint odd cycles, we construct a size-$O(n^2)$ extended formulation for the stable set polytope of $G$.
\end{abstract}

% ---
% ---
% ---

\section{Introduction}
It is a classic result that integer programs with a totally unimodular constraint matrix $A$ are solvable in strongly polynomial time.  
Very recently, Artmann, Weismantel and Zenklusen~\cite{AWZ17} generalized this to \emph{bimodular} matrices $A$. These include all matrices with all subdeterminants in $\{-2,-1,0,1,2\}$.  As noted in~\cite{AWZ17}, this has consequences for the maximum weight stable set problem in graphs
as follows.  

Let $\stab(G)$ be the \emph{stable set polytope} of a graph $ G $ %; that is, the convex hull of the characteristic vectors of the stable sets of $G$. 
and note that
\[
    \stab(G) = \conv \{ x \in \{0,1\}^{V(G)} \mid Mx \le \mathbf{1} \},
\]
where $M \in \{0,1\}^{E(G) \times V(G)}$ is the edge-node incidence matrix of $ G $. It is well-known that the maximum absolute value of a subdeterminant of $M$ is equal to $2^{\ocp(G)}$, where $\ocp(G)$ is the maximum number of (node-)disjoint odd cycles of $G$ (see \cite{GKS95}). Therefore, the bimodular algorithm of~\cite{AWZ17} can be used to efficiently compute a maximum weight stable set in a graph without two disjoint odd cycles.  

Although the bimodular algorithm is extremely powerful, it provides limited insight on which properties of graphs with $ \ocp(G) \le 1 $ are relevant to derive efficient algorithms for graphs with \emph{higher} odd cycle packing number. Indeed, in light of recent work linking the complexity and structural properties of integer programs to the magnitude of its subdeterminants~\cite{Tardos86,DF94,VC09,BDEHN14,EV17,AWZ17,PSW}, it is tempting to believe that integer programs with bounded subdeterminants can be solved in polynomial time. This would imply in particular that the stable set problem on graphs with $\ocp(G) \leqslant k$ is polynomial for every fixed $k$. Conforti, Fiorini, Huynh, Joret, and Weltge~\cite{CFHJW19} recently proved this is true under the additional assumption that $G$ has bounded (Euler) genus.\footnote{The \emph{Euler genus} of graph $G$ is the minimum of $|E(G)|-|V(G)|-|F(G)|-2$, taken over all embeddings of $G$ in a (orientable or non-orientable) surface, where $F(G)$ denotes the set of faces of $G$ with respect to the embedding.}

Furthermore, by itself the bimodular algorithm does not imply any linear description of the stable set polytope of graphs $G$ with $\ocp(G) = 1$. It turns out that for such graphs, $\STAB(G)$ may have many facets with high coefficients that do not seem to allow a ``nice'' combinatorial description in the original space. While stable set polytopes have been studied for several classes of graphs, very little is known about $\STAB(G)$ when $\ocp(G) = 1$. 

Our main result is to show that every such stable set polytope admits a compact description in an ``extended'' space.
An \emph{extended formulation} of a polyhedron $ P $ is a description of the form $ P = \{ x \mid \exists y : Ax + By \le b \} $ whose \emph{size} is the number of inequalities in $ Ax + By \le b $. The \emph{extension complexity} of $P$, denoted $\xc (P)$, is the minimum size of an extended formulation of $P$. Our main result is the following.

\begin{theorem} \label{thm:main}
For every $n$-node graph $G$ with $\ocp(G) \leqslant 1$, $\STAB (G)$ admits a size-$O(n^2)$ extended formulation. Moreover, this extended formulation can be constructed in polynomial time. 
\end{theorem}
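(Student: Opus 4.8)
The plan is to prove Theorem~\ref{thm:main} by a ``decompose, apply structure, recombine'' strategy, with the bulk of the work concentrated in one structural case. First I would carry out a routine reduction to sufficiently connected graphs. The stable set polytope behaves well under gluing along small vertex separators: if $G$ is obtained from graphs $G_1,G_2$ by identifying a separator $S=V(G_1)\cap V(G_2)$ with $|S|\le 3$, then a stable set of $G$ restricts to stable sets of $G_1$ and $G_2$ that agree on $S$, and conversely any two such stable sets glue to a stable set of $G$; hence, by disjunctive programming over the $O(1)$ stable $0/1$-patterns on $S$, $\STAB(G)$ is the convex hull of $O(1)$ polytopes, each obtained by fixing $x|_S$ and gluing the corresponding faces of $\STAB(G_1)$ and $\STAB(G_2)$, at a cost of $O(1)$ extra inequalities and variables per separator. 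Since $\ocp(\cdot)\le 1$ is inherited by subgraphs, and since $G$ admits a polynomial-time-computable tree-like decomposition into $O(n)$ pieces that are either sufficiently connected or of bounded size, glued along separators of size at most $3$, and since $\sum_i n_i=O(n)$ (separators duplicate only $O(1)$ vertices each) forces $\sum_i O(n_i^2)=O(n^2)$, it suffices to produce an $O(n_i^2)$-size extended formulation for each sufficiently connected piece.

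So assume $G$ is sufficiently connected with $\ocp(G)\le 1$, and invoke the structural characterization referenced in the introduction: $G$ is of one of three types. Type (iii): $G$ is one of finitely many exceptional graphs, in which case $\xc(\STAB(G))=O(1)$ trivially. Type (i): $G$ has an odd cycle transversal $T$ of bounded size, i.e.\ $G-T$ is bipartite with $|T|=O(1)$. Here I would enumerate the $2^{|T|}=O(1)$ stable subsets $R\subseteq T$ and, for each, describe the face of $\STAB(G)$ with $x|_R=\onevec$, $x|_{T\setminus R}=\zerovec$, and $x|_{N(R)\setminus T}=\zerovec$ by letting the remaining coordinates range over $\STAB(G-T-N(R))$; as $G-T-N(R)$ is bipartite, its stable set polytope is cut out by non-negativity and edge inequalities (these suffice by König's theorem / total unimodularity), an $O(n^2)$-size description. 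Taking the convex hull of these $O(1)$ pieces by disjunctive programming gives an $O(n^2)$-size extended formulation.

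The crux is type (ii): $G$ is ``essentially projective planar'', meaning (roughly) that $G$ is obtained from a planar bipartite graph $G_0$ by identifying vertices on its outer face in an antipodal, non-planar pattern, so that $G$ embeds in the projective plane with every face an even closed walk. The naive attempt --- take the $O(n^2)$-size edge relaxation of $\STAB(G_0)$, add the equations $x_a=x_b$ for each identified pair, and project out one coordinate per pair --- yields only a \emph{relaxation} of $\STAB(G)$, since the identifications turn some even closed walks of $G_0$ into odd cycles of $G$ whose inequalities are not implied. The plan is to repair this by adjoining to $\STAB(G_0)$ a linear-size gadget that enforces exactly the correct parity across the cross-cap; concretely, writing the cyclically ordered identified vertices as $v_1,\dots,v_m$ paired antipodally, one sweeps around this rim and introduces, for each position and each relevant boundary state of the sweep, a bounded number of variables encoding the partial solution, together with flow-conservation-type constraints gluing consecutive states. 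Because the rim has length $O(n)$ and each state involves $O(n)$ coordinates, this produces an $O(n^2)$-size extended formulation. The main obstacle --- and what genuinely needs proof --- is the correctness of this gadget: that every fractional point it discards indeed violates a valid (odd-cycle-type) inequality for $\STAB(G)$ while no integral stable set is lost; this is exactly where $\ocp(G)\le 1$, via the sharp structure of $G_0$ and the antipodal pattern on its outer face, is used. Polynomial-time constructibility is then inherited throughout, since the projective embedding, the graph $G_0$, and the identification pattern are all output in polynomial time by the (constructive) structure theorem.
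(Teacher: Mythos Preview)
Your reduction to ``sufficiently connected'' pieces has a genuine gap. Taking the convex hull, via Balas, of the $O(1)$ polytopes obtained by fixing each stable $0/1$-pattern on a separator $S$ with $|S|\le 3$ does \emph{not} cost ``$O(1)$ extra inequalities and variables'': it costs a full copy of each piece's formulation per pattern, i.e.\ $\xc(\STAB(G))\le 2^{|S|}\bigl(\xc(\STAB(G_1))+\xc(\STAB(G_2))\bigr)$, a \emph{multiplicative} constant that compounds along a tree of $\Theta(n)$ gluings and destroys any $O(n^2)$ bound. Nor can you simply intersect the two formulations on the shared coordinates: already $C_5$ split along a $2$-separator into two paths shows that $\{x : x|_{V(G_i)}\in\STAB(G_i),\ i=1,2\}$ strictly contains $\STAB(G)$. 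The paper explicitly flags this obstacle right after stating Theorem~\ref{thm:main}: multiple $k$-sums, even for $k=2$, do not in general preserve small extended formulations of stable set polytopes. A second, related problem is that the $4$-connected ``torsos'' of a tree decomposition carry virtual clique edges on the adhesions; without them the pieces are not $4$-connected and Lov\'asz' theorem does not apply, while with them $\ocp\le 1$ need not be inherited---this is precisely why the paper passes to \emph{signed} graphs. (Also, Lov\'asz' theorem has no ``finitely many exceptional graphs'' clause: the two outcomes are $\oct\le 3$ and an even-face projective-plane embedding.)

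The paper circumvents all of this. It reduces only along separators of size $\le 1$ (where gluing \emph{is} additive), and then, instead of decomposing down to $4$-connected pieces, it proves a tailored \emph{star} structure theorem (Theorem~\ref{thmStructure}, via Slilaty's signed refinement of Lov\'asz): $G$ decomposes into one central piece $H_0$ and \emph{bipartite} teeth $T_1,\dots,T_\ell$ attached along $\le 3$-separations, such that after replacing each $T_i$ by a constant-size gadget $H_i$ the resulting core still has an even-face projective-plane embedding. The core is handled by a concrete construction---map $\STAB$ affinely into the edge space and identify the image with the odd integer circulations in an alternating orientation of the dual graph (Section~\ref{secProjective_planar_case})---rather than by the vague rim-sweep you sketch, whose correctness you yourself identify as unproved. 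The crucial additive bound $\xc(\STAB(G^{(i-1)}))\le\xc(\STAB(G^{(i)}))+O(\mu_i)$ when reattaching a bipartite tooth (Theorem~\ref{thm:xcbound}) is the non-trivial heart of the argument: it is obtained not by naive Balas over separator patterns but by the gadget-based formulation of Lemma~\ref{lem:EF}, whose correctness uses edge-induced weights (Lemma~\ref{lem:edge_induced_weights}) and a subadditivity property (Lemma~\ref{lem:subadditivity}) that relies essentially on the bipartiteness of $T_i$.
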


Note that this does not follow from the main result of~\cite{AWZ17}. As noted in~\cite[Thm.~5.4]{CevallosWeltgeZenklusen}, integer hulls of bimodular integer programs can have exponential extension complexity. Moreover, Theorem~\ref{thm:main} also does not follow from~\cite{CFHJW19} since here we are dealing with \emph{arbitrary} graphs $G$ with $\ocp(G) \leqslant 1$.

Our proof is based on a characterization of graphs with $\ocp(G) \le 1$ due to Lov\'asz (see Seymour~\cite{seymour95}).  Kawarabaya\-shi and Ozeki~\cite{KO13} later gave a short, purely graph-theoretical proof of the same result.  Before stating Lov\'asz' theorem, we need a few more definitions.  

The \emph{odd cycle transversal number} of a graph $G$, denoted $\oct (G)$, is the minimum size of a set of nodes $X$ such that $G - X$ is bipartite.  The \emph{projective plane} is the surface obtained from a closed disk by identifying antipodal points on its boundary. An embedding of a graph $G$ in a surface is an \emph{even-face embedding} if every face of $G$ is an open disk bounded by an even cycle of $G$. We point out that graphs that are even-face embedded in a surface are in particular $2$-connected, since the embedding yields an ear-decomposition.

\begin{theorem}[Lov\'asz, cited in~\cite{seymour95}] \label{thm:Lovasz}
Let $G$ be a $4$-connected graph with $ \ocp(G) \le 1 $. Then
\begin{enumerate}[(i)]
    \item $\oct (G) \le 3$, or
    \item $G$ has an even-face embedding in the projective plane.  
\end{enumerate}
\end{theorem}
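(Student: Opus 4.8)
The plan is to prove the dichotomy in the form: if $G$ is $4$-connected, $\ocp(G)\le 1$, and alternative (i) fails (that is, $\oct(G)\ge 4$), then alternative (ii) holds, i.e.\ $G$ has an even-face embedding in the projective plane. The guiding intuition is topological. In an even-face embedding of the projective plane, every contractible cycle is a sum of (even) face boundaries and is therefore even, while every one-sided (non-contractible) cycle is odd; since any two non-contractible simple closed curves on the projective plane must cross, such an embedding automatically forces $\ocp\le 1$. So conclusion (ii) is the natural geometric explanation for why all odd cycles of $G$ pairwise intersect, and the whole difficulty is to show that when no $3$ vertices destroy all odd cycles, this pairwise-crossing pattern must actually arise from a crosscap.

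\textbf{Reductions.} If $G$ is bipartite then $\oct(G)=0$ and (i) holds, so I may assume $G$ is non-bipartite. Fix a \emph{shortest} odd cycle $C$. Since $G$ has no two disjoint odd cycles, $G-V(C)$ contains no odd cycle and is hence bipartite; write $(X,Y)$ for a bipartition of $H := G-V(C)$. Thus $V(C)$ is itself an odd cycle transversal, and if $C$ is a triangle we are in case (i); so I may assume every odd cycle is long and, more to the point, that no small subset of $V(C)$ blocks all odd cycles. Choosing $C$ shortest is important: it prevents a path of $H$ from providing a shortcut that would create a shorter odd cycle, which keeps the parity of each $C$-bridge well defined and globally consistent.

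\textbf{Core analysis.} The heart of the argument is to study how the bipartite graph $H$ attaches to $C$, using $4$-connectivity together with the no-two-disjoint-odd-cycles hypothesis. By Menger's theorem, $4$-connectivity guarantees that $C$ cannot be separated from the rest by fewer than four vertices, so the bridges of $C$ attach richly and supply many internally disjoint $H$-paths joining distinct vertices of $C$. Each such path $P$ from $u$ to $v$ closes up with an arc of $C$ into a cycle whose parity is governed by which classes $X,Y$ the endpoints of $P$ reach, and the requirement that two of these cycles never be simultaneously odd and disjoint imposes strong compatibility constraints on the cyclic order in which bridges attach around $C$. The key structural claim is then a dichotomy for these attachments: \emph{either} all odd cycles created this way can be blocked by at most three vertices of $C$ (the transversal case, contradicting $\oct(G)\ge 4$), \emph{or} the attachments occur in a ``crossing'' cyclic pattern around $C$ that cannot be realized in the plane. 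In the latter case I would read off a rotation system, placing $C$ as the boundary circle of the crosscap disk, routing the even bridges planarly and using the antipodal identification to absorb exactly the crossing (odd) bridges, and then checking that every resulting face is bounded by an even cycle. Alternatively one can bypass the explicit construction by invoking a projective-planarity certificate (a forbidden-minor list, or a signed-graph criterion that the negative cycles be precisely the one-sided cycles) to certify the embedding.

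\textbf{Main obstacle.} I expect the difficulty to lie exactly in this trigger and its realization: proving that failure of a $3$-vertex transversal forces the crossing pattern, and then upgrading a purely combinatorial crossing pattern into a genuine even-face embedding. This needs the full strength of $4$-connectivity, both to rule out degenerate attachments and to exclude small separators that would otherwise permit clique-sum-type counterexamples sitting outside the clean dichotomy. A careful parity bookkeeping of the bridges relative to the fixed shortest odd cycle is where the delicate case analysis of the Kawarabayashi--Ozeki argument resides, and getting the rotation system to certify that \emph{all} faces are even (rather than merely that an embedding exists) is the most technical point.
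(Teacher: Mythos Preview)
The paper does not prove Theorem~\ref{thm:Lovasz}. It is quoted as a known result due to Lov\'asz (attributed in Seymour~\cite{seymour95}), with the remark that Kawarabayashi and Ozeki~\cite{KO13} later gave a short graph-theoretic proof; the paper then uses it (in fact the signed refinement, Theorem~\ref{thm:Slilaty}) as a black box in the proof of Theorem~\ref{thmStructure}. So there is no ``paper's own proof'' to compare your attempt against.

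As for your proposal itself: it is a reasonable outline of the Kawarabayashi--Ozeki style argument (fix a shortest odd cycle $C$, analyze the bipartite remainder and the bridges of $C$, and show that failure of a small transversal forces a crossing pattern realizable on the crosscap), but it is explicitly a sketch rather than a proof. You yourself flag the two places where the real work lies: (a) showing that $\oct(G)\ge 4$ forces the specific crossing attachment pattern around $C$, and (b) promoting that combinatorial pattern to an actual even-face embedding with \emph{all} faces even. Neither step is carried out. In particular, the bridge-parity bookkeeping you allude to is genuinely delicate (one must control not only how bridges attach to $C$ but how they interact with one another through $H$), and ``read off a rotation system'' hides the work of verifying that no face becomes odd when the crossing bridges are routed through the crosscap. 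Your alternative of ``invoking a projective-planarity certificate'' is not a shortcut either: the relevant certificate is essentially Slilaty's theorem (Theorem~\ref{thm:Slilaty} here), which is the signed-graph strengthening of the very statement you are trying to prove, so appealing to it would be circular in spirit. If you want a complete argument, following~\cite{KO13} directly is the cleanest route.
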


Note that if a graph $G$ satisfies $(i)$ of Theorem~\ref{thm:Lovasz}, then $\STAB (G)$ has a linear-size extended formulation since it is the convex hull of the union of at most eight polytopes described by nonnegativity and edge constraints.
If $G$ is $4$-connected and satisfies~$(ii)$, we will show that $\STAB(G)$ admits a quadratic-size extended formulation.
To this end, we will consider an affine mapping of $\STAB(G)$ into the edge space $\R^{E(G)}$.
This novel view allows us to identify points in $\R^{V(G)}$ with circulations in a certain orientation of the dual graph of $G$, which can be compactly described using extended formulations.
This approach has been also taken in~\cite{CFHJW19}, where related embeddings of graphs on surfaces with higher genus were considered.
As even-face embeddings refer to a slightly different notion and since the projective plane allows for much more direct arguments, we provide a self-contained proof here.
This yields the proof of Theorem~\ref{thm:main} for the case of $4$-connected graphs.

For non-$4$-connected graphs with $\ocp(G) \le 1$ we have to address their decomposition in a non-trivial manner.
We will deal with polyhedral aspects of performing $2$- and $3$-sums, by exploiting special properties of such graphs.
We remark that for arbitrary graphs, performing multiple $ k $-sums does not preserve small extended formulations for the respective stable set polytopes, even for $k = 2$.  Note that our definition of $k$-sums allows some edges of the clique to be deleted (which is necessary for the proofs).  If no edges of the clique are deleted during a $k$-sum, then it is well-known that small extended formulations are preserved by Chv\'atal's clique cutset lemma~\cite[Theorem 4.1]{Chvatal}.

Our polyhedral analysis also crucially relies on new insights about $\stab(G)$ that hold for all graphs $G$.  For example, the structure of facets of stable set polytopes (see Lemma~\ref{lem:edge_induced_weights}), and the transformation of $\stab(G)$ into the edge space.
We believe that this perspective can be equally beneficial for future investigations of (general) stable set polytopes.

Each step of our proof can be executed in polynomial time, and therefore the extended formulation can be constructed in polynomial time.
Moreover, our proof can also be turned into a direct, purely graph-theoretic strongly polynomial time algorithm for the stable set problem in graphs $G$ with $\ocp(G) \le 1$.

\paragraph{Outline} In Section~\ref{secStructure}, we build on Theorem~\ref{thm:Lovasz} and its signed version due to Slilaty~\cite{slilaty07} to describe the structure of graphs without two disjoint odd cycles. Roughly, we prove that each  such graph $G$ either has $\oct(G) \leq 3$ or can be obtained from a graph $H_0$ having an even-face embedding in the projective plane by gluing internally disjoint bipartite graphs $T_1$, \ldots, $T_\ell$ ``around'' $H_0$ in a certain way. Section~\ref{secProjective_planar_case} contains a proof of Theorem~\ref{thm:main} for the case of graphs with an even-face embedding in the projective plane. The general case is treated in Sections~\ref{secGeneral_case} and \ref{secProofEF} by a delicate argument using certain gadgets $H_1$, \ldots, $H_\ell$ ``simulating'' the bipartite graphs $T_1$, \ldots, $T_\ell$. 

% ---
% ---
% ---

\section{The structure of graphs without two disjoint odd cycles}
\label{secStructure}

In this section we show that every graph without two disjoint odd cycles either has a small odd cycle transversal or has a structure that we will exploit later. For this purpose we use the notion of separations. A \emph{$ k $-separation} of a graph $ G $ is an ordered pair $ (G_0, G_1) $ of edge-disjoint subgraphs of $ G $ with $ G = G_0 \cup G_1 $, $ |V(G_0) \cap V(G_1)| = k $, and $E(G_0), E(G_1), V(G_1) \setminus V(G_0), V(G_0) \setminus V(G_1)$ all non-empty.  We say that a $k$-separation is \emph{linked} if for all distinct nodes $u, v$ of $V(G_0) \cap V(G_1)$ there exists a $u$--$v$ path in $G_1$ whose internal nodes are disjoint from $G_0$.

\begin{definition} \label{defnComb}
A \emph{star structure} of a graph $ G $ is a set of subgraphs $ H_0,T_1,\dots,T_\ell $ of $G$  such that for all $i \in [\ell]$: $ T_i$ is bipartite, $ (H_0 \cup_{j \ne i} T_j, T_i) $ is a linked $ k $-separation of $ G $ with $ k \le 3 $, and $V(T_i) \cap V(T_j) \subseteq V(H_0)$ for all $j \ne i$.
\end{definition}

% \todo[inline]{draw picture}

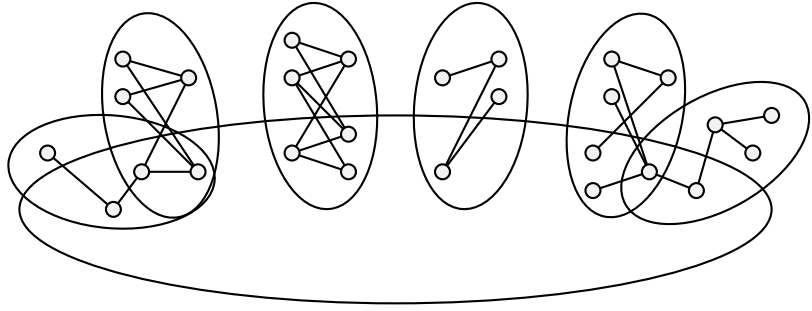
\begin{figure}[ht]
\centering
\begin{tikzpicture}[inner sep=2pt,scale=.5]
\tikzstyle{vtx}=[circle,draw,thick,fill=black!5]
% Main ellipse
\draw[thick] (9.5,2) ellipse (10 and 2.5);
% Zeroth tooth
\draw[thick,shift={(1.95,3)},rotate=85] (0,0) ellipse (1.5 and 2.75);
\draw[thick] (2,2) -- (0.25,3.5);
\draw[thick] (2,2) -- (2.75,3);
\node[vtx] at (2,2) {};
\node[vtx] at (0.25,3.5) {};
% First tooth
\draw[thick,shift={(3.25,4.5)},rotate=10] (0,0) ellipse (1.5 and 2.75);
\draw[thick] (2.75,3) -- (4.25,3);
\draw[thick] (2.75,3) -- (4,5.5);
\draw[thick] (2.25,5) -- (4.25,3);
\draw[thick] (2.25,5) -- (4,5.5);
\draw[thick] (2.25,6) -- (4.25,3);
\draw[thick] (2.25,6) -- (4,5.5);
\node[vtx] at (2.75,3) {};
\node[vtx] at (2.25,5) {};
\node[vtx] at (2.25,6) {};
\node[vtx] at (4.25,3) {};
\node[vtx] at (4,5.5) {};
% Second tooth
\draw[thick,shift={(7.5,4.75)},rotate=5] (0,0) ellipse (1.5 and 2.75);
\draw[thick] (6.75,3.5) -- (8.25,3);
\draw[thick] (6.75,3.5) -- (8.25,4);
\draw[thick] (6.75,3.5) -- (8.25,6);
\draw[thick] (6.75,5.5) -- (8.25,3);
\draw[thick] (6.75,5.5) -- (8.25,4);
\draw[thick] (6.75,5.5) -- (8.25,6);
\draw[thick] (6.75,6.5) -- (8.25,4);
\draw[thick] (6.75,6.5) -- (8.25,6);
\node[vtx] at (6.75,3.5) {};
\node[vtx] at (6.75,5.5) {};
\node[vtx] at (6.75,6.5) {};
\node[vtx] at (8.25,3) {};
\node[vtx] at (8.25,4) {};
\node[vtx] at (8.25,6) {};
% Third tooth
\draw[thick,shift={(11.5,4.75)},rotate=-5] (0,0) ellipse (1.5 and 2.75);
\draw[thick] (10.75,3) -- (12.25,5);
\draw[thick] (10.75,3) -- (12.25,6);
\draw[thick] (10.75,5.5) -- (12.25,6);
\node[vtx] at (10.75,3) {};
\node[vtx] at (10.75,5.5) {};
\node[vtx] at (12.25,5) {};
\node[vtx] at (12.25,6) {};
% Fourth tooth
\draw[thick,shift={(15.625,4.5)},rotate=-12] (0,0) ellipse (1.5 and 2.75);
\draw[thick] (14.75,2.5) -- (16.25,3);
\draw[thick] (14.75,3.5) -- (16.75,5.5);
\draw[thick] (15.25,5) -- (16.25,3);
\draw[thick] (15.25,6) -- (16.25,3);
\draw[thick] (15.25,6) -- (16.75,5.5);
\draw[thick] (16.25,3) -- (17.5,2.5);
\node[vtx] at (14.75,2.5) {};
\node[vtx] at (14.75,3.5) {};
\node[vtx] at (15.25,5) {};
\node[vtx] at (15.25,6) {};
\node[vtx] at (16.25,3) {};
\node[vtx] at (16.75,5.5) {};
%
% Fifth tooth
\draw[thick,shift={(18,3.5)},rotate=-60] (0,0) ellipse (1.5 and 2.75);
\draw[thick] (17.5,2.5) -- (18,4.25);
\draw[thick] (18,4.25) -- (19.5,4.5);
\draw[thick] (18,4.25) -- (19,3.5);
\node[vtx] at (17.5,2.5) {};
\node[vtx] at (19.5,4.5) {};
\node[vtx] at (19,3.5) {};
\node[vtx] at (18,4.25) {};
\end{tikzpicture}
\caption{A star structure.} \label{fig:comb_structure}
\end{figure}

For our structural result we will also use the notion of signed graphs. A \emph{signed graph} is a pair $(G, \Sigma)$ where $G$ is a graph and $\Sigma \subseteq E(G)$. A subgraph of $ G $ is said to be \emph{$ \Sigma $-odd} if it contains an odd number of edges in $ \Sigma $, and is \emph{$\Sigma$-even} otherwise. The \emph{odd cycle packing number} of a signed graph $(G, \Sigma) $ is the maximum number of disjoint $ \Sigma $-odd cycles in $(G, \Sigma) $, and is denoted by $\ocp(G,\Sigma)$. A signed graph $(G,\Sigma)$ is \emph{balanced} if $\ocp (G, \Sigma)=0$. The \emph{odd cycle transversal number} of $ (G, \Sigma) $ is the minimum number of nodes in $ (G, \Sigma) $ intersecting every $ \Sigma $-odd cycle in $ (G, \Sigma) $, and is denoted by $ \oct(G, \Sigma) $. An embedding of a signed graph $(G, \Sigma)$ in a surface is an \emph{even-face embedding} if every face of $(G, \Sigma)$ is an open disk bounded by a $\Sigma$-even cycle of $(G, \Sigma)$. Graphs in this section may have parallel edges. 

In the definition below, $\biguplus$ is used to denote the edge-disjoint union of graphs.

\begin{definition} \label{defnHStar}
Let $ G $ be a graph with star structure $ H_0, T_1,\dots,T_\ell $.  For each $i \in [\ell]$, let $S_i=V(H_0) \cap V(T_i)$ and note that there is a signed clique $ (K_i, \Sigma_i) $ with $ V(K_i) = S_i $ such that $ (K_i \biguplus T_i, \Sigma_i \biguplus E(T_i)) $ is balanced. The signed graph $ (H^+, \Sigma) $ is then defined via $ H^+ := H_0 \biguplus K_1 \biguplus \dots \biguplus K_\ell $ and $ \Sigma := E(H_0) \biguplus \Sigma_1 \biguplus \dots \biguplus \Sigma_\ell $.
\end{definition}

The structural result is the following.

\begin{theorem}
    \label{thmStructure}
    Let $ G $ be a graph with $ \ocp(G) = 1 $ and $ \oct(G) \ge 4 $. Then $ G $ admits a star structure $ H_0, T_1,\dots,T_\ell $,  such that $S_1, \dots, S_\ell$ and $(H^+, \Sigma) $ from Definition~\ref{defnHStar} have the following properties:
    \begin{itemize}
        \item $ S_i$ is not a subset of $S_j$ for all distinct $i,j \in [\ell]$,
        \item $ (H^+, \Sigma) $ has an even-face embedding in the projective plane, and
        \item the nodes of each $ S_i $ are on the boundary of some face of the embedding.
    \end{itemize}
\end{theorem}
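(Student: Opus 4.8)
The plan is to reduce $G$ --- along linked separations of order at most three whose ``far side'' is bipartite --- to a signed graph that is highly connected, to apply the signed version of Lov\'asz' theorem (due to Slilaty~\cite{slilaty07}) to that signed graph, and then to translate the resulting even-face projective embedding back into the required star structure. Concretely, among all ways of writing $G$ as a union $H_0 \cup T_1 \cup \dots \cup T_\ell$ forming a star structure with every $T_i$ bipartite, I would fix one minimizing $|V(H_0)| + |E(H_0)|$ and, subject to that, maximizing $\ell$. Such a structure exists, e.g.\ by induction on $|V(G)|$: if $G$ has no linked separation of order at most three with a bipartite side, then $\ell = 0$ and $H_0 = G$ works; otherwise peel off such a side and recurse. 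For the minimal structure the first bullet is immediate: if $S_i := V(H_0) \cap V(T_i)$ were contained in $S_j$ for some $j \ne i$, one could replace $T_j$ by $T_i \cup T_j$ --- still bipartite, since a cycle meeting both pieces must pass through $S_i \subseteq S_j$ and hence lie in neither interior --- and delete $T_i$, contradicting minimality.

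Let $(H^+, \Sigma)$ be the signed graph of Definition~\ref{defnHStar}, so that $V(H^+) = V(H_0)$. The crucial point is that disjoint $\Sigma$-odd cycles of $(H^+, \Sigma)$, and blocking sets of $(H^+, \Sigma)$, transfer back to $G$. Given a $\Sigma$-odd cycle of $(H^+,\Sigma)$, first shorten any sub-path running inside a single $K_i$ to a single edge --- possible because $(K_i, \Sigma_i)$ is balanced, so any triangle of $K_i$ is $\Sigma_i$-even, which keeps the $\Sigma$-parity --- after which the cycle uses at most one edge of each $K_i$; each such edge $uv \in E(K_i)$ can then be rerouted along the $u$--$v$ path in $T_i$ provided by linkedness, whose internal vertices avoid $V(H_0) = V(H^+)$, so disjointness of several such cycles and (by the choice of $\Sigma_i$) their parities are preserved. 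Hence disjoint $\Sigma$-odd cycles of $(H^+,\Sigma)$ yield disjoint odd cycles of $G$, so $\ocp(H^+,\Sigma) \le \ocp(G) = 1$; and the same rerouting turns a blocking set of $(H^+,\Sigma)$ into a blocking set of $G$ of equal size, so $\oct(H^+,\Sigma) \ge \oct(G) \ge 4$ and in particular $(H^+, \Sigma)$ is unbalanced. Next I would invoke the minimality of the star structure to argue that $(H^+, \Sigma)$ satisfies the connectivity hypothesis of Slilaty's theorem: a separation of $(H^+,\Sigma)$ of order at most three violating that hypothesis can be ``un-reduced'' --- re-attaching the $T_j$'s it carries --- into a star structure contradicting the minimality of $|V(H_0)| + |E(H_0)|$ or the maximality of $\ell$.

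Applying the signed version of Lov\'asz' theorem to $(H^+, \Sigma)$, the outcome $\oct(H^+,\Sigma) \le 3$ is excluded by the previous paragraph, and the finitely many exceptional (``tangled'') signed graphs in Slilaty's list are dispatched by hand --- each has a blocking set of size at most three, or cannot arise from our reductions. Hence $(H^+, \Sigma)$ has an even-face embedding in the projective plane, the second bullet. For the third bullet, fix such an embedding and suppose some $S_i$ is not on the boundary of a face; since $K_i$ is a clique on the at most three vertices of $S_i$, this forces $K_i$ to be a non-facial $\Sigma_i$-even triangle, hence a separating triangle of $H^+$. Then the part of $H^+$ strictly on one side of $S_i$, re-expanded with the $T_j$'s it carries, is a bipartite subgraph of $G$ attached along $S_i$, which can be folded into $T_i$ (a new cycle again passes through $S_i$, so bipartiteness is preserved) --- contradicting minimality. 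So every $S_i$ lies on a face.

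The main obstacle is this connectivity argument together with the case analysis inside Slilaty's theorem: one must pin down the exact connectivity hypothesis under which that theorem holds, check that the minimal star structure meets it \emph{while keeping every clique $K_i$ intact}, and verify carefully that ``un-reducing'' a separation of $(H^+,\Sigma)$ really produces a bipartite subgraph of $G$ and a legitimate (linked, with the non-containment property) star structure --- this is the delicate bookkeeping the paper refers to, and the tangled signed graphs must also be ruled out. Since each atomic operation --- exhibiting a small separation, testing bipartiteness, computing a projective-plane embedding, performing a reduction --- is polynomial, iterating them produces the star structure in polynomial time.
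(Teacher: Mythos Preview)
Your overall architecture matches the paper's: pick an extremal star structure, transfer $\ocp$ and $\oct$ from $G$ to $(H^+,\Sigma)$ (this is the paper's Lemma~\ref{lemCoreOCPOCT}), argue that $(H^+,\Sigma)$ is $4$-connected by minimality, and apply Slilaty's theorem. But your treatment of the first bullet is broken in two places, and that is where the argument actually fails.

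First, your extremal choice points the wrong way. You minimize $|V(H_0)|+|E(H_0)|$ and then \emph{maximize} $\ell$; merging $T_i$ into $T_j$ keeps $H_0$ unchanged and \emph{decreases} $\ell$, so producing such a merged star structure contradicts nothing in your setup. The paper takes $(|V(H_0)|,\ell)$ lexicographically \emph{minimal}, and then the merge contradicts minimality of $\ell$.

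Second, and more seriously, your justification that $T_i \cup T_j$ is bipartite is not a proof. Saying ``a cycle meeting both pieces must pass through $S_i$ and hence lie in neither interior'' does not make the cycle even; a cycle can cross $S_i$ and still be odd. The correct argument uses both global hypotheses on $G$: since $|S_i|\le 3 < \oct(G)$, there is an odd cycle $C$ in $G-S_i$; the separations with cut-sets $S_i$ and $S_j\subseteq S_i$ then force $C$ to be vertex-disjoint from $T_i\cup T_j$, and now $\ocp(G)=1$ forbids a second (disjoint) odd cycle inside $T_i\cup T_j$. Without invoking $\ocp(G)=1$ and $\oct(G)\ge 4$ the bipartiteness claim is simply false in general.

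Two smaller remarks. Your third-bullet argument (fold one side of the separating triangle back into $T_i$) is more work than needed: once $(H^+,\Sigma)$ is known to be $4$-connected, a non-facial triangle $K_i$ is already a $3$-cut and you are done; that is exactly what the paper does. And you need not worry about ``tangled'' exceptional signed graphs: the version of Slilaty's theorem quoted here (Theorem~\ref{thm:Slilaty}) is for genuinely $4$-connected signed graphs and has no sporadic outcomes beyond $\oct\le 3$.
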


In order to obtain the above statement, we will use a finer version of Theorem~\ref{thm:Lovasz} that is suited for signed graphs, due to Slilaty~\cite{slilaty07}. The latter result was previously known by Gerards, Lov\'asz, and others, but~\cite{slilaty07} is the first time it appears in print.   

\begin{theorem}[Slilaty~\cite{slilaty07}] \label{thm:Slilaty}
Let $ (G, \Sigma) $ be a $ 4 $-connected signed graph with $ \ocp(G, \Sigma) \le 1 $. Then
\begin{enumerate}[(1)]
    \item $ \oct(G, \Sigma) \le 3 $ or
    \item $ (G, \Sigma) $ has an even-face embedding in the projective plane.
\end{enumerate}
\end{theorem}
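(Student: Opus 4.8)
The plan is to pass to the \emph{signed double cover} and reduce the statement to a question about planarity in the presence of a prescribed free involution. Recall first that the family of $\Sigma$-odd cycles is invariant under \emph{switching} (replacing $\Sigma$ by $\Sigma \mathbin{\triangle} \delta(U)$ for a vertex set $U$), so everything below depends only on the set of odd cycles and not on the chosen representative $\Sigma$; in particular $\ocp(G,\Sigma)$ and $\oct(G,\Sigma)$ are switching invariants. Given $(G,\Sigma)$, I would build the double cover $\tilde G$ on vertex set $\{v^+, v^- : v \in V(G)\}$, joining $u^+v^+, u^-v^-$ for each positive edge $uv \notin \Sigma$ and $u^+v^-, u^-v^+$ for each negative edge $uv \in \Sigma$. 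The deck transformation $\tau$ swapping $v^+ \leftrightarrow v^-$ is a fixed-point-free involution; a $\Sigma$-even cycle of $G$ lifts to two disjoint cycles of $\tilde G$, while a $\Sigma$-odd cycle lifts to a single $\tau$-invariant cycle of twice the length.

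The key observation driving the proof is the topological dictionary between $(G,\Sigma)$ and $\tilde G$: an even-face embedding of $(G,\Sigma)$ in the projective plane is exactly a planar (i.e.\ spherical) embedding of $\tilde G$ in which $\tau$ acts as the antipodal map, since $\mathbb{RP}^2$ has $S^2$ as its orientation double cover, contractible ($\Sigma$-even) cycles lift to pairs of cycles, and non-contractible ($\Sigma$-odd) cycles lift to single symmetric cycles. Under this dictionary, conclusion $(2)$ becomes the assertion that $\tilde G$ is planar with a $\tau$-antipodal embedding, and the hypothesis $\ocp(G,\Sigma) \le 1$ becomes the statement that $\tilde G$ has no two disjoint $\tau$-invariant cycles (two disjoint antipodally symmetric simple closed curves cannot coexist on the sphere). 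So the theorem reduces to the dichotomy: either $\tilde G$ is planar, giving $(2)$ after checking antipodality, or $\tilde G$ contains a Kuratowski obstruction, from which I aim to extract a $\Sigma$-odd-cycle transversal of $G$ of size at most three, giving $(1)$.

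To carry this out I would first use the $4$-connectivity of $G$ to control the connectivity of $\tilde G$, so that in the planar case its spherical embedding is essentially unique by Whitney's theorem and one only has to verify that $\tau$ realizes it antipodally (fixed-point-free and orientation-reversing) rather than as some other free involution; once this holds, every face is automatically bounded by a $\Sigma$-even cycle, yielding the even-face projective embedding. In the non-planar case I would take a $K_5$- or $K_{3,3}$-subdivision in $\tilde G$, project it back to $G$, and use the fact that $\ocp(G,\Sigma) \le 1$ forbids two disjoint odd cycles to force the two ``antipodal copies'' of the obstruction to overlap heavily, pinning all $\Sigma$-odd cycles through a bounded set of branch vertices. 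The hard part will be precisely this last step: converting a $\tau$-symmetric Kuratowski obstruction into a genuine size-$3$ transversal is where the real content lives, as it amounts to classifying the ``tangled'' $4$-connected signed graphs that fail to be projective-planar and bounding their odd cycle transversal—the delicate, connectivity-driven analysis of Slilaty, anticipated by Gerards and Lov\'asz. A secondary subtlety is the connectivity bookkeeping in the first step, since the double cover of a $4$-connected graph need not be highly connected, so nearly balanced or otherwise degenerate covers must be treated separately.
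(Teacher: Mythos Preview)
The paper does not prove this theorem: it is quoted from Slilaty~\cite{slilaty07} and used as a black box (the paper even remarks that the result was known earlier to Gerards, Lov\'asz, and others, with~\cite{slilaty07} being the first printed source). So there is no ``paper's own proof'' to compare against.

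As for your proposal itself, it is a plan, not a proof, and you say so yourself. The double-cover dictionary you set up is correct and is indeed the standard translation: an even-face embedding of $(G,\Sigma)$ in $\mathbb{RP}^2$ corresponds to a spherical embedding of $\tilde G$ on which $\tau$ acts as the antipodal involution. But both of the places where the argument has to do work are left open. In the planar case you still need to show that the (essentially unique) spherical embedding of $\tilde G$ makes $\tau$ antipodal; this is not automatic from Whitney-type uniqueness, because a free involution of a $3$-connected planar graph can in principle act as a rotation rather than the antipode, and ruling that out uses the $\Sigma$-odd-cycle structure. In the non-planar case you explicitly defer the entire content---extracting a size-$3$ odd-cycle transversal from a $\tau$-symmetric Kuratowski obstruction---to ``the delicate, connectivity-driven analysis of Slilaty.'' That step \emph{is} the theorem; naming it is not proving it. So what you have is a reasonable outline of one known route to the result, but with the substantive arguments on both branches still to be supplied.
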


\begin{lemma} \label{lemCoreOCPOCT}
    Let $ G $ be a graph with star structure $ H_0,T_1,\dots,T_\ell $ and let $ (H^+, \Sigma) $ be as in Definition~\ref{defnHStar}. Then $ \ocp(H^+, \Sigma) = \ocp(G) $ and $ \oct(H^+, \Sigma) \ge \oct(G) $.
\end{lemma}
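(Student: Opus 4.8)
The plan is to set up a parity-preserving correspondence between odd cycles of $G$ and $\Sigma$-odd cycles of $(H^+,\Sigma)$, by ``blowing up'' each clique $K_i$ into the bipartite graph $T_i$ and, conversely, ``contracting'' $T_i$ back to $K_i$; the two inequalities for $\ocp$ and the inequality for $\oct$ all fall out of this correspondence.

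First I would record a few local facts. Fix for each $i \in [\ell]$ a proper $2$-colouring $c_i$ of $T_i$; by linkedness all nodes of $S_i$ lie in a single component of $T_i$, so $c_i$ is determined on $S_i$ up to a global swap, and one may take $\Sigma_i$ to be the set of edges of $K_i$ joining differently coloured nodes (this $\Sigma_i$ is exactly the signature making $(K_i \biguplus T_i, \Sigma_i \biguplus E(T_i))$ balanced, with $c_i$ as witnessing bipartition). It follows that any walk in $T_i$ between $a,b \in S_i$ has length parity equal to $[\,c_i(a)\neq c_i(b)\,] = [\,ab \in \Sigma_i\,]$, and that $\Sigma_i$, being an edge cut of $K_i$, meets every triangle of $K_i$ in an even number of edges, so $(K_i,\Sigma_i)$ has no $\Sigma$-odd triangle. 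Next I would prove the structural observation that powers everything: since $|S_i| \le 3$, since $T_i$ is bipartite, and since distinct $T_j$ meet only inside $H_0$, for every odd cycle $D$ of $G$ and every $i$ the edge set $D \cap E(T_i)$ is either empty or a single path whose two endpoints are distinct nodes of $S_i$; symmetrically, for every $\Sigma$-odd cycle $C$ of $(H^+,\Sigma)$ and every $i$, $C \cap E(K_i)$ is either empty or a single path with distinct endpoints in $S_i$. The argument is identical in both cases: the set of nodes at which the cycle switches between $E(T_i)$ (resp.\ $E(K_i)$) and the complementary edges is a subset of $S_i$ of even size; it is non-empty whenever the cycle meets $E(T_i)$ (resp.\ $E(K_i)$) but is not contained in $T_i$ (resp.\ $K_i$) — the latter using the absence of $\Sigma$-odd triangles — hence, as $|S_i| \le 3$, it has size exactly $2$.

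Given this, $\ocp(H^+,\Sigma) = \ocp(G)$ follows from two replacement arguments. For ``$\ge$'': from disjoint odd cycles $D_1,\dots,D_r$ of $G$, replace in each $D_j$ every path $D_j \cap E(T_i)$, say with endpoints $a,b$, by the single edge $ab$ of $K_i$; parities are preserved by the first paragraph, each resulting closed walk $D_j'$ is a cycle of $(H^+,\Sigma)$ on a subset of $V(D_j) \cap V(H_0)$, and the $D_j'$ stay pairwise disjoint since $V(D_j') \subseteq V(D_j)$. For ``$\le$'': from disjoint $\Sigma$-odd cycles $C_1,\dots,C_r$ of $(H^+,\Sigma)$, replace in each $C_j$ every path $C_j \cap E(K_i)$, say with endpoints $a,b$, by an $a$--$b$ path $R_i$ in $T_i$ whose internal nodes avoid $S_i$ (such a path exists by linkedness and has the correct parity). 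The paths $R_i$ now lie in pairwise internally disjoint ``private'' subgraphs of $G$ and avoid $V(H_0)$; moreover, since a cycle using an edge of $K_i$ uses at least two of the at most three nodes of $S_i$, at most one $C_j$ meets $E(K_i)$, so replacement paths inside a fixed $T_i$ never collide. Hence the $C_j'$ are disjoint odd cycles of $G$, which gives $\ocp(G) \ge \ocp(H^+,\Sigma)$.

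For the transversal inequality I would reuse only the first replacement: if $X \subseteq V(H^+) = V(H_0)$ meets every $\Sigma$-odd cycle of $(H^+,\Sigma)$, then for any odd cycle $D$ of $G - X$ the cycle $D'$ produced above is a $\Sigma$-odd cycle of $(H^+,\Sigma)$ with $V(D') \subseteq V(D) \cap V(H_0) \subseteq V(H^+) \setminus X$, contradicting the choice of $X$; thus $X$ is an odd cycle transversal of $G$, so $\oct(G) \le |X|$ and hence $\oct(H^+,\Sigma) \ge \oct(G)$. I expect the only delicate point to be the disjointness bookkeeping in the clique-to-$T_i$ blow-up, where one must use simultaneously that $|S_i| \le 3$ (so at most one of the given cycles enters any fixed $K_i$), that linkedness supplies $T_i$-paths internally avoiding $S_i$ (hence avoiding $V(H_0)$ and every other $T_j$), and that distinct $T_j$ pairwise intersect only inside $H_0$; the parity bookkeeping, by contrast, is immediate from the colourings $c_i$.
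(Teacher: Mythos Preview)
Your proof is correct and follows essentially the same route as the paper: both directions of the $\ocp$ equality are obtained by replacing $K_i$-subpaths with $T_i$-paths (using linkedness and $|S_i|\le 3$ for disjointness) and vice versa (using bipartiteness of $T_i$), with balancedness of $(K_i\biguplus T_i,\Sigma_i\biguplus E(T_i))$ controlling parity, and the $\oct$ inequality reuses the $G\to H^+$ replacement together with $V(D')\subseteq V(D)$. Your write-up is somewhat more explicit---you spell out the $2$-colourings $c_i$ and the ``single-path'' structural observation---but the argument is the same as the paper's.
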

\begin{proof}
    Let $ (K_1,\Sigma_1),\dots,(K_\ell, \Sigma_\ell) $ be as in Definition~\ref{defnHStar}.   We first show $ \ocp(H^+, \Sigma) \le \ocp(G)$. Let $ C_1,\dots,C_k $ be disjoint $ \Sigma $-odd cycles in $ (H^+, \Sigma) $. For each $ j \in [k] $ we will construct an odd cycle $ C_j' $ in $ G $ by performing the following modifications to $C_j$ for each $i \in [\ell]$.
    First observe that $ C_j $ contains at most two edges from $ K_i $. Otherwise, $C_j=K_i$ since $|V(K_i)| \le 3 $, which contradicts that $ (K_i, \Sigma_i) $ is balanced.   If $ C_j $ uses only one edge $ uv $ of $ K_i $, we replace $uv$ by a $ u $--$ v $ path $ P_i $ in $ T_i $ whose internal nodes are disjoint from $V(K_i)$.  Note that $P_i$ exists since the separation $(H_0 \cup_{h \ne i} T_h, T_i)$ is linked.  If $ C_j $ uses two edges from $ K_i $, say $ uv $ and $ vw $, we replace $\{uv, vw\}$ by a $ u $--$ w $ path $ P_i $ in $ T_i $. Again, $P_i$ exists by linkedness.  If $C_j$ is edge-disjoint from $K_i$, we make no modification to $C_j$ for $i$ and set $P_i=\emptyset$. Since $(K_i \uplus T_i, \Sigma_i \uplus E(T_i))$ is balanced, $|E(P_i)|$ and $|E(C_j ) \cap E(K_i) \cap \Sigma|$ have the same parity.
    Therefore, $|E(C_j')|$ and $|E(C_j) \cap \Sigma|$ have the same parity, and so $C_j'$ is odd.   Finally, the cycles $ C_1',\dots,C_k' $ are still disjoint since for each $ i \in [\ell] $ there is at most one cycle $ C_j $ that contains an edge from $ K_i $ (due to $ |V(K_i)| \le 3)$.  Thus, $ \ocp(H^+, \Sigma) \le \ocp(G) $.
    
    For the other inequalities, consider an arbitrary odd cycle $C'$ in $G$.  By reversing the construction from the previous paragraph, there exists a $\Sigma$-odd cycle $C$ in $ (H^+, \Sigma) $ with $V(C) \subseteq V(C')$.   It follows that $ \ocp(H^+, \Sigma) \ge \ocp(G) $ and $ \oct(H^+, \Sigma) \ge \oct(G) $.  
     \qed
\end{proof}

\begin{proof}[Proof of Theorem~\ref{thmStructure}]
Let $ G $ be a graph with $ \ocp(G) = 1 $ and $ \oct(G) \ge 4 $. Let $ H_0,T_1,\dots,T_\ell $ be a star structure with $( |V(H_0)|, \ell) $ lexicographically minimal.  Note that such a star structure exists since $ G $ is a star structure of itself.

 Suppose there exist distinct $ i,j \in [\ell] $ such that $ S_j \subseteq S_i  $. Since $ |S_i| \le 3 $ and $ \oct(G) \ge 4 $, $ G - S_i $ contains an odd cycle $ C $. Note that $ C $ is not a subgraph of $ T_i \cup T_j $ because $ T_i $ and $ T_j $ are both bipartite and hence every odd cycle of $ T_i \cup T_j $ must intersect $ S_i $. Since $ \ocp(G) \le 1 $ this implies that $ T_i \cup T_j $ is bipartite, a contradiction to the minimality of $ \ell $.
 
 Suppose $ (H^+, \Sigma) $ is not $ 4 $-connected. Let $ ((H_1, \Upsilon_1), (H_2, \Upsilon_2)) $ be a separation of $ (H^+, \Sigma) $ with $X:= V(H_1) \cap V(H_2)$ and $|X| \le 3$. By Lemma~\ref{lemCoreOCPOCT}, $ \ocp(H^+, \Sigma) = 1 $ and $\oct(H^+, \Sigma) \ge 4 $. Therefore, exactly one of $(H_1, \Upsilon_1)-X$ or $(H_2, \Upsilon_2)-X$ is balanced.  By symmetry, we may assume that $(H_2, \Upsilon_2)-X$ is balanced, and by taking $|V(H_2)|$ to be minimal we may assume that $ ((H_1, \Upsilon_1), (H_2, \Upsilon_2)) $ is linked. Recall that $ (H^+, \Sigma) $ arises from $ H_0 $ by adding (balanced) signed cliques $ (K_1,\Sigma_1),\dots,(K_\ell,\Sigma_\ell) $ corresponding to the bipartite graphs $ T_1,\dots,T_\ell$. Replacing each $ (K_i, \Sigma_i) $ by the bipartite graph $ T_i $, we see that $ G $ admits a star structure $ H_0', T_1',\dots,T_q' $ where $ V(H_0') = V(H_1) $ , a contradiction to the minimality of $| V(H_0)| $.
Since $ \ocp(H^+, \Sigma) = 1 $ and $ \oct(H^+, \Sigma) \ge 4 $, Theorem~\ref{thm:Slilaty} implies that $ (H^+, \Sigma) $ has an even-face embedding in the projective plane.

Suppose $ S_i $ is not contained on the boundary of a face of the embedding for some $i \in [\ell]$. Since all nodes in $ S_i $ are adjacent in $ (H^+, \Sigma) $, this implies $ |S_i| = 3 $. 
But now, $ S_i $ is a cutset of $ (H^+, \Sigma) $, contradicting that $ (H^+, \Sigma) $ is $ 4 $-connected. \qed
\end{proof}

% ---
% ---
% ---

\section{The projective planar case}
\label{secProjective_planar_case}

In this section, we give a compact extended formulation for $\stab(G)$ when $G$ has an even-face embedding in the projective plane.  The results in this section follow from~\cite{CFHJW19}, where graphs embedded in bounded genus surfaces are considered.  However, to keep our exposition self-contained, we include all proofs.  Moreover, since the projective planar case is devoid of many of the technical difficulties for the bounded genus case, we hope that this section can serve as a gentler introduction to these techniques than~\cite{CFHJW19}.

Our starting point is the unbounded polyhedron 
\[
    \sub(G) := \conv \{ x \in \Z^{V(G)} \mid Mx \le \onevec \},
\]
where $M$ is the edge-node incidence matrix of $G$. Its relationship to $\STAB(G)$ is as follows.

\begin{lemma}
\label{lemStabVsSub}
For every graph $ G $, $ \stab(G) = \sub(G) \cap [0,1]^{V(G)} $.
\end{lemma}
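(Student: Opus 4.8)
The plan is to prove the two inclusions $\stab(G) \subseteq \sub(G) \cap [0,1]^{V(G)}$ and $\sub(G) \cap [0,1]^{V(G)} \subseteq \stab(G)$ separately, using that $\stab(G)$ is by definition the convex hull of the integer points in $\{x \mid Mx \le \onevec\} \cap [0,1]^{V(G)}$.

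For the forward inclusion, note that every $0/1$ vector $x$ with $Mx \le \onevec$ lies in $\{x \in \Z^{V(G)} \mid Mx \le \onevec\}$, hence in $\sub(G)$, and also trivially in $[0,1]^{V(G)}$; since both $\sub(G)$ and $[0,1]^{V(G)}$ are convex, their intersection is convex and contains $\stab(G) = \conv\{x \in \{0,1\}^{V(G)} \mid Mx \le \onevec\}$. This direction is immediate.

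For the reverse inclusion, the key point is that $\sub(G) \cap [0,1]^{V(G)}$ is an \emph{integral} polytope, i.e. all its vertices are $0/1$ vectors; once this is established, every vertex $x$ of $\sub(G) \cap [0,1]^{V(G)}$ satisfies $x \in \{0,1\}^{V(G)}$ and $Mx \le \onevec$ (the latter because $x \in \sub(G)$ and $\sub(G)$ satisfies all the inequalities $Mx \le \onevec$ by construction, being the convex hull of a set of points that do), so $x$ is one of the points whose convex hull defines $\stab(G)$. To see integrality: let $x$ be a vertex of $\sub(G) \cap [0,1]^{V(G)}$. Since $x \in \sub(G)$, it is a convex combination of integer points of $\{y \in \Z^{V(G)} \mid My \le \onevec\}$; as $x$ also lies in $[0,1]^{V(G)}$, I would argue each integer point appearing in this combination must itself lie in $[0,1]^{V(G)}$ — indeed any integer vector with a coordinate outside $[0,1]$ cannot be ``cancelled'' in a convex combination that stays in $[0,1]^{V(G)}$, a standard coordinatewise argument — so $x$ is a convex combination of integer points of $\{y \in \{0,1\}^{V(G)} \mid My \le \onevec\}$, hence $x \in \stab(G)$, and in particular, being a vertex, $x$ is itself such a point.

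I expect the only genuine obstacle to be making precise the claim that a point of $\sub(G)$ lying in $[0,1]^{V(G)}$ is already in the convex hull of the $0/1$ points of $\{My \le \onevec\}$; this requires a clean argument that in any representation of such a point as a convex combination of lattice points from the defining set, no lattice point with an out-of-range coordinate can occur. This follows by looking at a coordinate $v$ where some lattice point $y$ has $y_v \le -1$ or $y_v \ge 2$: since all coefficients are nonnegative and sum to $1$ and the convex combination has $v$-coordinate in $[0,1]$, one derives a contradiction unless such $y$ has coefficient $0$. Everything else is routine convexity bookkeeping, and the fact that $\sub(G) \subseteq \{x \mid Mx \le \onevec\}$ is immediate from the definition of $\sub(G)$ as a convex hull of points satisfying those inequalities.
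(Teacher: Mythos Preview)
Your forward inclusion is fine, but the reverse inclusion has a real gap. You claim that if $x \in \sub(G) \cap [0,1]^{V(G)}$ is written as a convex combination of integer points $y^{(1)}, \dots, y^{(k)}$ satisfying $My^{(i)} \le \onevec$, then each $y^{(i)}$ must already lie in $[0,1]^{V(G)}$. This is false: take $G = K_2$ with nodes $u, v$; the integer points $(2,-1)$ and $(-1,2)$ both satisfy $x_u + x_v \le 1$, and their midpoint $(\tfrac{1}{2},\tfrac{1}{2})$ lies in $[0,1]^2$. Your coordinatewise argument fails because a coordinate $\ge 2$ in one summand can be cancelled by a coordinate $\le -1$ in another. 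Showing that \emph{some} convex representation of $x$ uses only $0/1$ vectors is precisely the content of the lemma, and cannot be read off from an arbitrary representation.

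The paper establishes integrality of $\sub(G) \cap [0,1]^{V(G)}$ by induction on $|V(G)|$. The two ingredients are: (i) the projection of $\sub(G)$ onto the coordinates of $G - v_0$ equals $\sub(G - v_0)$ for any node $v_0$; and (ii) every vertex $x^*$ of $\sub(G) \cap [0,1]^{V(G)}$ has a zero coordinate. For (ii), if no coordinate is $0$ then (using the edge constraints and connectivity) no coordinate is $1$ either, so none of the box constraints is tight at $x^*$; this forces $x^*$ to already be a vertex of $\sub(G)$, hence integer---a contradiction. Picking $v_0$ with $x^*_{v_0} = 0$, one projects to $\sub(G - v_0) \cap [0,1]^{V(G-v_0)}$, applies induction, and lifts the resulting convex combination of stable sets of $G - v_0$ back to $G$ using $x^*_{v_0} = 0$.
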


Thus, given an extended formulation for $ \sub(G) $ we only need to add at most $ 2|V(G)| $ linear inequalities (describing $ [0,1]^{V(G)}) $ to obtain one for $ \stab(G) $.  To prove Lemma~\ref{lemStabVsSub}, we exploit the following result.

\begin{lemma}
    \label{lem:P(G)_projection}
    Let $G$ be any graph, and let $v_0$ be any fixed node of $G$. The projection of $\sub(G)$ onto the coordinates indexed by $V(G-v_0)$ equals $\sub(G-v_0)$.
\end{lemma}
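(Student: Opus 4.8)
The plan is to reduce the statement to a purely combinatorial fact about lattice points, exploiting that a linear map commutes with taking convex hulls. Write $\proj$ for the coordinate projection $\R^{V(G)} \to \R^{V(G-v_0)}$ that forgets the coordinate indexed by $v_0$, let $M'$ denote the edge-node incidence matrix of $G - v_0$, and set
\[
    S := \{ x \in \Z^{V(G)} \mid Mx \le \onevec \}, \qquad S' := \{ y \in \Z^{V(G-v_0)} \mid M'y \le \onevec \},
\]
so that $\sub(G) = \conv S$ and $\sub(G-v_0) = \conv S'$ by definition. Since $\proj(\conv S) = \conv(\proj S)$ for every set $S$ (linear maps commute with convex hulls: both sides are the finite convex combinations of the corresponding point sets), it suffices to prove the finite/combinatorial identity $\proj(S) = S'$.

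For the inclusion $\proj(S) \subseteq S'$, I would simply observe that $E(G - v_0)$ is exactly the set of edges of $G$ not incident to $v_0$, and the inequality of $Mx \le \onevec$ associated with such an edge $uv$ reads $x_u + x_v \le 1$ and involves only coordinates in $V(G - v_0)$. Hence the restriction of any $x \in S$ to $V(G - v_0)$ already lies in $S'$.

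For the reverse inclusion $S' \subseteq \proj(S)$, the idea is to lift each lattice point: given $y \in S'$, define $y_{v_0} := 1 - \max\{\, y_u \mid u v_0 \in E(G) \,\}$ if $v_0$ has a neighbour, and $y_{v_0} := 0$ if $v_0$ is isolated. In either case $y_{v_0} \in \Z$, and the resulting vector $\hat y \in \Z^{V(G)}$ satisfies $M \hat y \le \onevec$: the inequalities of edges avoiding $v_0$ hold because $y \in S'$, while for an edge $u v_0$ we get $\hat y_u + \hat y_{v_0} = y_u + 1 - \max\{ y_w : w v_0 \in E(G)\} \le 1$. Thus $\hat y \in S$ and $\proj(\hat y) = y$. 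Combining the two inclusions gives $\proj(S) = S'$, whence $\proj(\sub(G)) = \proj(\conv S) = \conv(\proj S) = \conv S' = \sub(G - v_0)$.

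I do not expect a genuine obstacle here; the only points needing a moment's care are (a) that $\sub(G)$ carries no nonnegativity constraints and is unbounded, so the lifted value $y_{v_0}$ is allowed to be negative and needs no upper bound; (b) the degenerate case where $v_0$ is isolated (or, if one allows them, incident to a loop, in which case one intersects with $y_{v_0} \le 0$); and (c) the standard fact that a linear image of a convex hull equals the convex hull of the image, which is precisely what lets us work with the lattice-point sets $S, S'$ rather than manipulating the polyhedra $\sub(G), \sub(G-v_0)$ directly.
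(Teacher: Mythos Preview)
Your argument is correct and follows essentially the same approach as the paper: both reduce to showing $\proj(S) = S'$ for the underlying lattice-point sets via the identity $\proj(\conv S) = \conv(\proj S)$, with the forward inclusion immediate from the edge constraints and the reverse by lifting an integer point with a sufficiently small value at $v_0$. Your version is slightly more explicit (giving the formula $y_{v_0} = 1 - \max_u y_u$ rather than the paper's ``decrease by a sufficiently large integer amount'') and handles the isolated-vertex edge case, but the substance is identical.
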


\begin{proof}
    To see that the projection of $\sub(G)$ is contained in $\sub(G-v_0)$, it suffices to prove that every integer point $x \in \sub(G)$ projects to a point in $\sub(G-v_0)$. Let $x' \in \Z^{V(G-v_0)}$ be the projection of $x$. Then, for every edge $vw$ in $G-v_0$ we have $x'_v + x'_w = x_v + x_w \leqslant 1$ and hence $x' \in \sub(G-v_0)$, as claimed.

    Conversely, let $x' \in \Z^{V(G-v_0)}$ be any integer point in $\sub(G-v_0)$. Consider a point $x \in \Z^{V(G)}$ that projects to $x'$. By decreasing $x_{v_0}$ by a sufficiently large integer amount, we may assume that $x_v + x_w \leqslant 1$ for all edges $vw \in E(G)$. Hence, $x$ is an integer point in $\sub(G)$. We conclude that the projection of $\sub(G)$ contains $\sub(G-v_0)$. \qed
\end{proof}

\begin{proof}[Proof of Lemma~\ref{lemStabVsSub}]
    It suffices to show that the polytope $\sub(G) \cap [0,1]^{V(G)}$ is integer. We establish this claim by induction on the number of nodes of $G$. The statement is clearly true if $G$ consists of a single node. Now assume that $G$ has at least two nodes, and the statement holds for all proper induced subgraphs of $G$. We have to show that it holds for $G$ itself.

    We may assume that $G$ is connected. If not, then let $G_1$ and $G_2$ be disjoint and proper induced subgraphs of $G$ whose union is equal to $G$, and in particular $\sub(G) = \sub(G_1) \times \sub(G_2)$. By the induction hypothesis we know that $ \sub(G_1) \cap [0,1]^{V(G_1)} $ and $ \sub(G_2) \cap [0,1]^{V(G_2)} $ are integer and hence $\sub(G) \cap [0,1]^{V(G)} = (\sub(G_1) \cap [0,1]^{V(G_1)}) \times (\sub(G_2) \cap [0,1]^{V(G_2)}) $ is integer as well.

    Now consider any vertex $x^*$ of $ \sub(G) \cap [0,1]^{V(G)} $. Let $V_0 \subseteq V(G) $ denote the set of nodes $v$ such that $x^*_v = 0$ and $V_1 \subseteq V(G)$ denote the set of nodes $v$ such that $x^*_v = 1$.

    Let us first consider the case that $V_0 = \emptyset$. We claim that also $V_1 = \emptyset$. Suppose not, so $x^*_v = 1$ for some $v \in V(G)$. Let $w \in V(G)$ be a neighbor of $v$. Such a node exists since $G$ is connected and has at least two nodes. Since $x^*_w \geqslant 0 $ and $ x^*_v + x^*_w \leqslant 1 $, we obtain $x^*_w = 0$, a contradiction to $V_0 = \emptyset$. So, in this case we would have $0 < x^*_v < 1$ for all $v \in V(G)$, implying that $x^*$ is a vertex of $\sub(G)$. However, vertices of $\sub(G)$ are integer and hence we arrive at another contradiction.

    Thus, there must exist a node $v_0 \in V_0$. By Lemma~\ref{lem:\sub(G)_projection}, the projection of $x^*$ onto the coordinates indexed by $V(G-v_0)$ belongs to $\sub(G-v_0) \cap [0,1]^{V(G-v_0)}$. By induction, this projection can be expressed as a convex combination of 0/1-points in $\sub(G-v_0)$. Thus, there exist stable sets $S_1,\dotsc,S_k$ of $G-v_0$ and coefficients $\lambda_1, \dotsc, \lambda_k \in \R_{\ge 0}$ such that $\sum_{i} \lambda_i = 1$ and
    \[
    x^*_v = \sum_{i : v \in S_i} \lambda_i
    \]
    for all $v \in V(G-v_0)$. Since $x^*_{v_0} = 0$, the equation above also holds for $v = v_0$. Now, every stable set of $G-v_0$ is also a stable set of $G$. It follows that $x^*$ is a convex combination of $0/1$-points in $\sub(G)$. \qed
\end{proof}

Thus, it suffices to study $\sub(G)$ instead of $\stab(G)$. To this end, it is convenient to switch from the node space of $G$ to the edge space of $G$ by considering the affine map $ \sigma : \R^{V(G)} \to \R^{E(G)} $ defined via
\[
    \sigma(x) := \onevec - Mx\,.
\]
Under $\sigma$, a vector $ x \in \R^{V(G)} $ is mapped to $ y = \sigma(x) \in \R^{E(G)} $ where $ y_{vw} = 1 - x_v - x_w $ for every edge $ vw \in E(G) $. Since $\sigma$ is invertible if and only if $G$ has no bipartite component, we can focus on $\slack(G) := \sigma(P(G))$. 

We provide an extended formulation for $\slack(G)$, assuming that $G$ is even-face embedded in the projective plane. Let $G^*$ be the dual graph of $G$. An orientation $D$ of the edges of $G^*$ is called \emph{alternating} if in the local cyclic ordering of the edges incident to each dual node $ f $, the edges alternatively leave and enter $ f $. If $G$ admits an alternating orientation of its dual graph, we will relate the points in $\slack(G)$ to certain circulations in $D$, which, as we will see, gives rise to a compact extended formulation. 

%Let us say that a graph $G$ satisfies the \emph{standard assumptions} if it is non-bipartite, $2$-connected, and even-face embedded in the projective plane. 
% The following result can be derived from~\cite[Lem.~17]{CFHJW19}, although our assumptions on the embedding are weaker. For completeness, we give a proof below.

\begin{lemma}
\label{lemQofGWithCirculations}
Let $G$ be a non-bipartite graph that is even-face embedded in the projective plane. Then the dual graph $G^*$ of $G$ has an alternating orientation.  
\end{lemma}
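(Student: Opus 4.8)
The plan is to exploit the fact that $G$ is even-face embedded in the projective plane to 2-color the faces of $G$, and then orient the dual edges according to this coloring. Concretely: I would first observe that every face of $G$ is bounded by an even cycle, so the face-width / structure of the embedding lets me treat the dual graph $G^*$ as follows. Around each dual node $f$ (i.e., around each face of $G$), the edges of $G^*$ incident to $f$ are in cyclic correspondence with the edges of $G$ on the boundary walk of that face; since the boundary is an even cycle, there is an even number of them. So "alternating" is at least locally possible — the question is whether a global orientation realizing this exists.

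The key step is to pass to the dual: an alternating orientation of $G^*$ is exactly a proper 2-coloring of the faces of $G^*$ — equivalently, a proper 2-coloring of the vertices of $G$ — no wait, that is not quite it either, since "alternating around each dual node" is a constraint per face of $G$, i.e., per vertex of $G^*$. Let me restate: I claim an alternating orientation of $G^*$ exists if and only if every face of $G^*$ (= every vertex of $G$) has even degree in $G^*$... but that's automatic since $G^*$ has no constraint there. The right framework is: orient each edge $e^*$ of $G^*$ by declaring, for the two faces $f, f'$ of $G$ it separates, which side it "leaves"; consistency around a dual node $f$ says the boundary edges of face $f$ must alternate in/out, which (the boundary being an even closed walk) is consistent locally. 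To globalize, I would 2-color the faces of the embedding. Since $G$ is even-face embedded in the projective plane, consider the "radial" or "angle" structure: I would argue that the faces of $G$ can be properly 2-colored — that is, $G^*$ is bipartite. Indeed, a cycle in $G^*$ corresponds to a closed walk crossing edges of $G$; even-faceness forces every such closed walk that bounds a disk to cross an even number of edges, and in the projective plane the only other homotopy class is the non-contractible one, which one checks also crosses an even number of edges precisely because $G$ is non-bipartite (a non-contractible cycle of $G$ is odd, and the algebraic intersection with it is even)... this parity bookkeeping is the heart of the matter.

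So the steps in order: (1) Recall that the faces of an even-face embedded graph give $G^*$ all even face-degrees and set up the correspondence between alternating orientations of $G^*$ and 2-colorings; (2) Prove $G^*$ is bipartite by showing every cycle of $G^*$ has even length, splitting into the contractible case (bounds a disk, apply even-faceness / sum of face boundary lengths) and the non-contractible case (use that two non-contractible curves in the projective plane cross an odd number of times, combined with $G$ non-bipartite so it has an odd non-contractible cycle, to deduce the needed parity); (3) From a proper 2-coloring of $V(G^*)$, read off the alternating orientation: at each dual node $f$, orienting "out" exactly on the boundary edges whose other endpoint has a fixed color makes them alternate, since along the even boundary cycle of the face $f$ the colors alternate.

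The main obstacle I expect is step (2), specifically the non-contractible case: carefully relating the parity of a dual cycle's length to the parity of a primal cycle it "crosses", using the projective-plane intersection form, and invoking non-bipartiteness of $G$ at exactly the right place. Everything else is bookkeeping about the local rotation system, but getting the global parity argument clean — and making sure I'm using "even-face embedding" and "projective plane" rather than some stronger hypothesis — is where the real work lies.
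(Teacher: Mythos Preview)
Your step (2) is false: $G^*$ need not be bipartite. Take $G = K_4$ with its even-face embedding in the projective plane (three quadrilateral faces, one for each $4$-cycle of $K_4$; this is consistent with Euler's formula $4-6+3=1$). The dual $G^*$ then has three vertices, with two parallel edges between each pair, and hence contains odd cycles.

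Your step (3) also fails independently of (2). If $G^*$ were bipartite and you oriented each dual edge toward the endpoint of a fixed colour, then at every dual node $f$ \emph{all} incident edges would point the same way, since every neighbour of $f$ lies in the opposite colour class. That is the opposite of alternating. Your justification ``along the even boundary cycle of the face $f$ the colours alternate'' conflates neighbours of $f$ in $G^*$ (which, under a bipartition, all share one colour) with something else --- perhaps the vertices of $G$ on the boundary of $f$, but $G$ is non-bipartite, so those do not properly $2$-colour either.

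The object that actually needs to be $2$-coloured is not $V(G^*)$ but the set of \emph{local orientations} at the vertices of $G$: at each $v\in V(G)$ choose one of the two orientations of a small disk around $v$. These choices determine a signature $\Sigma\subseteq E(G)$ (the edges where the two endpoint orientations disagree), and an alternating orientation of $G^*$ can be read off directly once $\Sigma=E(G)$. Achieving $\Sigma=E(G)$ is possible precisely when every cycle of $G$ is $1$-sided if and only if it is odd. That equivalence is where even-face embedding (for $2$-sided $\Rightarrow$ even, via summing facial boundaries in a disk) and non-bipartiteness (for $1$-sided $\Rightarrow$ odd) genuinely enter --- it is the parity argument you were reaching for, but carried out on cycles of $G$ rather than of $G^*$.
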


The proof of Lemma~\ref{lemQofGWithCirculations} relies on the fact that the parity of every cycle in $G$ is determined by a certain topological property of the cycle.  Before going into more details, we need the notion of a signature of an embedded graph.  

Let $G$ be a graph embedded in the projective plane.  
Each $u \in V(G)$ has a neighborhood that is a disk $\Delta_u$. By arbitrarily choosing one of the two orientations of each $\Delta_u$, we obtain a local orientation at each $u \in V(G)$. Now, take any edge $e=vw$, and let $\Delta_e$ be a disk containing $e$. The local orientations at $v$ and $w$ are either consistent or inconsistent within $\Delta_e$. We define the \emph{signature} $\Sigma \subseteq E(G)$ as the set of edges $e = vw$ such that the local orientations at nodes $v$ and $w$ are inconsistent. Note that the signature depends on the choice of local orientations.  However, it turns out that all signatures are `equivalent' in a sense which we now describe.  

A cycle of $G$ is said to be \emph{$1$-sided} if it is $\Sigma$-odd and \emph{$2$-sided} otherwise. 
% This definition naturally extends to walks in $G$. 
Notice that changing the local orientations at some nodes corresponds to resigning on a cut. Therefore, the property of being $1$-sided or $2$-sided does not depend on the local orientations, and only on the embedding of $G$. We point out that a cycle is $2$-sided if and only if it bounds a disk. This follows from the fact that in the projective plane, $2$-sided cycles are always surface separating. A proof of this fact and other basic properties of curves in the projective plane can be found in~\cite{MoharThom}.

\begin{lemma} \label{2sided}
Let $G$ be a non-bipartite graph that is even-face embedded in the projective plane.  Then a cycle of $G$ is $2$-sided if and only if it is even.
\end{lemma}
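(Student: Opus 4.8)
The plan is to use the topological fact quoted just before the statement: in the projective plane, a cycle is $2$-sided if and only if it bounds a disk. So it suffices to show that a cycle $C$ of $G$ bounds a disk if and only if $C$ is even. First I would set up the two faces picture: when a cycle $C$ bounds a disk $\Delta$, the disk $\Delta$ is a union of (closed) faces of the embedding, and I want to count edges of $C$ modulo $2$ via the faces inside $\Delta$.

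\emph{$2$-sided $\Rightarrow$ even.} Suppose $C$ bounds a disk $\Delta$. Let $F_1,\dots,F_t$ be the faces of $G$ contained in $\Delta$, each bounded by a cycle which is even since the embedding is even-face. Consider the sum $\sum_{i=1}^t |E(\partial F_i)|$, taken modulo $2$. This sum is even, being a sum of even numbers. On the other hand, every edge $e$ of $G$ lying strictly inside $\Delta$ (i.e.\ not on $C$) is on the boundary of exactly two of the $F_i$, so it contributes $0$ modulo $2$; every edge of $C$ lies on the boundary of exactly one $F_i$, so it contributes $1$. (Here I should be a little careful about edges or vertices of $G$ that lie inside $\Delta$ but are not on any face boundary contained in $\Delta$ — in a $2$-connected even-face embedded graph this degenerate situation does not arise for the interior of a disk bounded by a cycle, since every edge is incident to two faces; I would state this cleanly.) Hence $|E(C)| \equiv \sum_i |E(\partial F_i)| \equiv 0 \pmod 2$, so $C$ is even.

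\emph{Even $\Rightarrow$ $2$-sided.} I argue the contrapositive: if $C$ is $1$-sided, then $C$ is odd. Since $G$ is non-bipartite and even-face embedded, it does contain at least one $1$-sided cycle (otherwise every cycle would be $2$-sided, hence even by the previous paragraph, making $G$ bipartite). Fix any $1$-sided cycle $C_0$; by the first part $C_0$ would be even if it were $2$-sided, but that does not immediately tell us its parity. Instead I use the standard fact that the $1$-sided/$2$-sided distinction is governed by the $\Z_2$-homology class: the symmetric difference (mod $2$) of two $1$-sided cycles is a $2$-sided even subgraph, i.e.\ an even edge set that is an edge-disjoint union of $2$-sided cycles, hence has even size. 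Concretely, for any cycle $C$, write $C = C_0 \,\triangle\, D$ where $D$ is a disjoint union of $2$-sided cycles (possible precisely when $C$ has the same $\Z_2$-homology class as $C_0$, i.e.\ when $C$ is $1$-sided). Each $2$-sided cycle in $D$ is even by the first part, so $|E(D)|$ is even, giving $|E(C)| \equiv |E(C_0)| \pmod 2$. Thus all $1$-sided cycles have the same parity, and likewise (trivially, from the first part) all $2$-sided cycles are even. If that common parity of $1$-sided cycles were even, every cycle of $G$ would be even and $G$ would be bipartite, contradiction. Hence every $1$-sided cycle is odd, which is exactly the claim.

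\textbf{Main obstacle.} The delicate point is the homological bookkeeping: making precise that every cycle is homologous mod $2$ to either $\zerovec$ (if $2$-sided) or to a fixed $1$-sided cycle (if $1$-sided), and that two cycles in the same $\Z_2$-class differ by a $2$-sided even subgraph that decomposes into $2$-sided cycles. This is exactly where the specific topology of the projective plane ($H_1(\cdot;\Z_2) \cong \Z_2$, and $2$-sided cycles being surface-separating) is used, and I would lean on the cited reference~\cite{MoharThom} for these facts rather than reprove them. Everything else is the elementary double-counting of edges across faces of the disk.
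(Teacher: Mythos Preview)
Your proof follows essentially the same approach as the paper's. The first direction (\,$2$-sided $\Rightarrow$ even\,) is identical: express $E(C)$ as the symmetric difference of the facial cycles inside the disk $\Delta$ and use that each has even length.

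For the converse, the paper also argues by contraposition and also reduces to showing that all $1$-sided cycles have the same parity, then invokes non-bipartiteness. The one difference is in how this common parity is established. The paper uses directly that any $1$-sided cycle is the symmetric difference of a fixed $1$-sided cycle $C$ together with some \emph{facial} cycles; since facial cycles are even by hypothesis, the parity is preserved. You instead write $C = C_0 \,\triangle\, D$ and claim that $D$ decomposes into $2$-sided simple cycles, then invoke the first part. This last claim is the one spot that is not quite justified: an Eulerian subgraph certainly decomposes into edge-disjoint cycles, but there is no immediate reason each piece must be $2$-sided (a priori $D$ could split as two $1$-sided cycles). What \emph{is} immediate is that $D$, being $\Z_2$-homologically trivial, is a symmetric difference of facial cycles---and that already gives $|E(D)|$ even without any further decomposition. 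So your ``main obstacle'' dissolves if you replace ``disjoint union of $2$-sided cycles'' by ``symmetric difference of facial cycles,'' which is exactly the paper's phrasing.
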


\begin{proof}
A cycle of $G$ is a \emph{facial cycle} if it is the boundary of a face of $G$. Let $C$ be a $2$-sided cycle of $G$. Then $C$ bounds a closed disk $\Delta$ in the projective plane.  Observe that $E(C)$ is the symmetric difference of all $E(F)$, where $F$ ranges over all facial cycles of $G$ contained in $\Delta$.  Since $G$ is even-face embedded, $|E(F)|$ is even for all such $F$, and hence $|E(C)|$ is also even.  

For the other direction, let $C$ be a $1$-sided cycle of $G$.  It is well-known that every $1$-sided cycle of $G$ is the symmetric difference of $C$ together with some facial cycles of $G$.  Therefore, if $C$ is even, then every $1$-sided cycle of $G$ is even. Since we have already established that every $2$-sided cycle of $G$ is even, $G$ is bipartite, which is a contradiction. Therefore, $C$ is odd.   
\qed
\end{proof}
% We claim that every $2$-sided cycle of $G$ is even, and every bicycle of $G$ is even.

% Let $C$ be a $2$-sided cycle of $G$. Then $C$ bounds a disk and hence $C$ bounds the outer face of a planar subgraph $H$ of $G$, such that has all faces of $H$ (except possibly the outer face) are even. However, since every planar graph contains an even number of odd faces, $C$ must be even. 

% Let $B$ be a bicycle in $G$, and let $\Delta$ be either of the two faces of $B$. Notice that $B$ is the symmetric difference of the edge sets of the (boundary of the) faces of $G$ that are contained in $\Delta$. Since $G$ is even-face embedded, all these faces are even, and hence, $B$ is even. This concludes the proof of the claim.

% If $G$ contained a $2$-sided odd walk, then by Lemma~\ref{bicycle_lemma} it would contain a $2$-sided odd cycle or an odd bicycle, contradicting our claim.

\begin{proof}[Proof of Lemma~\ref{lemQofGWithCirculations}]
Let $T$ be a spanning tree of $G$, and let $\Delta$ be a disk containing $T$. Pick local orientations at each node in order to put all the edges of $T$ in the corresponding signature $\Sigma$. Seen in $\Delta$, this corresponds to picking a proper $2$-coloring of $T$, assigning to the nodes in one color class the clockwise orientation and to the nodes in the other color class the counterclockwise orientation.

Now take any edge $e$ of $G$ that is not an edge of $T$. Let $C$ denote the unique cycle in $T + e$. By Lemma~\ref{2sided}, $C$ is $\Sigma$-even if and only if $C$ is even.  Since $f \in \Sigma$ for all $f \in C \setminus \{e\}$, if follows that $e \in \Sigma$. Therefore, for this choice of local orientations, we have $\Sigma=E(G)$. We will use these local orientations to define an orientation of $G^*$ as follows. 

% If $C$ is $2$-sided then $C$ is even, by Lemma~\ref{lemNo2-sidedOddWalk}. It follows that $e \in \Sigma$. 

% Next, suppose that $C$ is $1$-sided. If $C$ is odd then we also get that $e \in \Sigma$. Otherwise, $C$ is even. Consider any cycle $C'$ that is odd, and hence $1$-sided. Since $G$ is embedded in the projective plane, $C$ and $C'$ have at least one common node. Let $W$ denote the closed walk obtained by concatenating $C$ and $C'$. Notice that $W$ is $2$-sided and odd, contradicting Lemma~\ref{lemNo2-sidedOddWalk}.

 Let $F$ be a face of $G$ and $v_F$ be the corresponding dual node in $G^*$.  Let $uv$ be an edge of $G$ on the boundary of $F$ and $uv^*$ be the corresponding dual edge in $G^*$. Let $\Delta_u$ and $\Delta_v$ be neighbourhoods of $u$ and $v$ such that $uv$ intersects the boundaries of $\Delta_u$ and $\Delta_v$ exactly once, say at $u'$ and $v'$, respectively.  Let $\vec{O_u}$ and $\vec{O_v}$ be the orientations of the boundaries of $\Delta_u$ and $\Delta_v$ given by the local orientations chosen for $u$ and $v$.  Since $uv \in \Sigma$, it follows that at $u'$ and $v'$, $\vec{O_u}$ and $\vec{O_v}$ are either both entering $F$ or both leaving $F$. If they are both entering $F$ we orient $uv^*$ towards $v_F$, and if they are both leaving $F$ we orient $uv^*$ away from $v_F$. Since every face of $G$ is even, note that $v_F$ is an even-degree vertex of $G^*$, and by construction, the orientation is alternating at $v_F$.

Moreover, since every edge of $G$ is in the signature $\Sigma$, repeating the same construction for each face of $G$ gives a well-defined alternating orientation of $G^*$.   \qed
\end{proof}

Let $G$ be even-face embedded in the projective plane and $D$ be an alternating orientation of $G^*$. Note that there is a bijection between the edges of $G$ and the arcs of $D$.  Therefore, we may regard a vector $y \in \R^{E(G)}$ as a vector in $\R^{A(D)}$, and vice versa.  With this identification, $\slack(G)$ turns out to be the convex hull of all non-negative integer circulations of $D$ that satisfy one additional constraint.  

 \begin{lemma} \label{lem:odd_circulations}
Let $G$ be a non-bipartite graph that is even-face embedded in the projective plane, $D$ be an alternating orientation of $G^*$, and $C$ be an arbitrary odd cycle in $G$.  Then  
\[
\slack(G) = \conv \{ y \in \Z^{E(G)}_{\ge 0} \mid  y \text{ is a circulation in } D 
    \text{ and } y(E(C)) \text{ is odd} \}.
    \]
\end{lemma}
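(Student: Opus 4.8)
The plan is to show that the two \emph{sets of points} at hand coincide, and then take convex hulls. Since $\sigma$ is affine and $\sub(G) = \conv\{x \in \Z^{V(G)} \mid Mx \le \onevec\}$, we have $\slack(G) = \sigma(\sub(G)) = \conv\{\onevec - Mx \mid x \in \Z^{V(G)},\ Mx \le \onevec\}$, so it suffices to prove that $\{\onevec - Mx \mid x \in \Z^{V(G)},\ Mx \le \onevec\}$ equals the set $R$ of non-negative integer circulations $y$ of $D$ for which $y(E(C))$ is odd.

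For the inclusion ``$\subseteq$'', take $x \in \Z^{V(G)}$ with $Mx \le \onevec$ and set $y := \onevec - Mx = \sigma(x)$; then $y \in \Z^{E(G)}_{\ge 0}$ is immediate. To see that $y$ is a circulation in $D$, fix a face $F$ of $G$ with dual node $v_F$ and list the edges on the boundary cycle of $F$ in cyclic order as $e_1 = w_0w_1,\ e_2 = w_1w_2,\ \dots,\ e_{2m} = w_{2m-1}w_0$ (an even number, since $G$ is even-face embedded). As $D$ is alternating at $v_F$, the arcs $e_i$ leave $v_F$ for $i$ odd and enter $v_F$ for $i$ even (up to swapping), so the net flow of $y$ out of $v_F$ equals $\sum_{i=1}^{2m}(-1)^{i+1}(1 - x_{w_{i-1}} - x_{w_i})$; the constant part vanishes, and each $x_{w_j}$ occurs in exactly two summands, with coefficients $(-1)^j$ and $(-1)^{j+1}$, so the $x$-part vanishes as well. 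Finally $y(E(C)) = \sum_{vw\in E(C)}(1 - x_v - x_w) = |E(C)| - 2\sum_{v\in V(C)} x_v \equiv |E(C)| \equiv 1 \pmod 2$, whence $y \in R$.

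For ``$\supseteq$'', the crux, take $y \in R$; it suffices to find $x \in \Z^{V(G)}$ with $Mx = \onevec - y$, since then $Mx \le \onevec$ follows from $y \ge \zerovec$ and $\sigma(x) = y$. The computation above shows that $\sigma(x)$ is a circulation in $D$ for \emph{every} $x \in \R^{V(G)}$, i.e.\ $\operatorname{im}\sigma \subseteq Z(D)$, where $Z(D) \subseteq \R^{E(G)}$ denotes the space of circulations in $D$. Since $G$ is connected and non-bipartite, $M$ has rank $|V(G)|$, so $\dim\operatorname{im}\sigma = |V(G)|$; since $G$ is $2$-connected and cellularly embedded in the projective plane, Euler's formula $|V(G)| - |E(G)| + |F(G)| = 1$ gives $\dim Z(D) = |E(G)| - |F(G)| + 1 = |V(G)|$. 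Hence $\operatorname{im}\sigma = Z(D) \ni y$, and as $M$ is injective there is a unique $x^* \in \R^{V(G)}$ with $\sigma(x^*) = y$. It remains to check that $x^* \in \Z^{V(G)}$. Traversing $C = v_0v_1\cdots v_r = v_0$ (with $r = |E(C)|$ odd) inside the relations $x^*_{v_i} + x^*_{v_{i+1}} = 1 - y_{v_iv_{i+1}} \in \Z$ gives $x^*_{v_i} = (-1)^i x^*_{v_0} + (\text{integer})$ and, at $i = r$, $2x^*_{v_0} \in \Z$; by connectivity $2x^* \in \Z^{V(G)}$, and all entries of $2x^*$ share a common parity (from $x^*_v + x^*_w \in \Z$ along edges, via connectivity). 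Were that parity odd, $2\sum_{v\in V(C)} x^*_v$ would be odd (as $|V(C)| = |E(C)|$ is odd), contradicting $2\sum_{v\in V(C)} x^*_v = |E(C)| - y(E(C)) \equiv 1 - 1 \equiv 0 \pmod 2$. So the common parity is even, $x^* \in \Z^{V(G)}$, and $y = \onevec - Mx^*$ lies in the left-hand set. Taking convex hulls of the resulting equality of point sets proves the lemma.

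The main obstacle is the ``$\supseteq$'' direction. That $\sigma(x)$ is non-negative, integral, a circulation, and odd on $E(C)$ for integral $x$ is essentially mechanical; the content lies in recovering an \emph{integral} preimage of a given circulation $y$, and this is exactly where the hypothesis ``$y(E(C))$ is odd'' is used — otherwise the unique real preimage $x^*$ can be genuinely half-integral. Two supporting facts also matter: the circulation property of $\sigma(x)$ rests on $D$ being alternating (Lemma~\ref{lemQofGWithCirculations}), and the dimension count rests on the even-face embedding being cellular so that Euler's formula applies. As a byproduct, the argument shows that $R$ does not depend on the choice of the odd cycle $C$, as the statement implicitly claims.
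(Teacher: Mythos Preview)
Your proof is correct and follows essentially the same approach as the paper: both establish $\operatorname{im}\sigma = Z(D)$ via the inclusion plus a dimension count using Euler's formula, and both recover integrality of the unique preimage $x^*$ from the parity condition on $y(E(C))$. The only cosmetic difference is that the paper reads off $x^*_{v_0}\in\Z$ directly from the alternating-sum identity $\sum_i (-1)^i y_{e_i} = 2x^*_{v_0}-1$ (which has the same parity as $y(E(C))$), whereas you first deduce $2x^*\in\Z^{V(G)}$ with constant parity and then rule out the odd case via $|E(C)|-y(E(C))\equiv 0\pmod 2$.
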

\begin{proof}
    Setting $L := \{ y \in \R^{E(G)} \mid y \text{ is a circulation in } D \}$, we have to show that
    \begin{equation*}
        \slack(G) = \conv \{ y \in L \cap \Z^{E(G)}_{\ge 0} \mid y(E(C)) \text{ is odd} \} =: \slack'(G)
    \end{equation*}
    holds.
    To this end, we first show that $\sigma(\R^{V(G)}) = L$ holds.
    To see that $\sigma(\R^{V(G)}) \subseteq L$ let $x \in \R^{V(G)}$ and consider $y = \sigma(x) \in \R^{E(G)}$.
    Let $f \in V(D)$ be any node of the dual graph.
    As $G$ is even-face embedded, $f$ is bounded by an even cycle $C_f$ in $G$.
    Let $e_1 = v_0 v_1,\, e_2 = v_1 v_2, \dots,\, e_{2k} = v_{2k-1} v_{2k}$ denote the edges of $C_f$, where $v_0 = v_{2k}$.
    Note that the edges of $C_f$ correspond to the arcs of $D$ that are incident to $f$.
    As $D$ is alternating, we have
    \begin{align*}
        \pm \left( y(\delta^{\mathrm{in}}(f)) - y(\delta^{\mathrm{out}}(f)) \right)
        = \sum_{i=1}^{2k} (-1)^i y_{e_i}
        = \sum_{i=1}^{2k} (-1)^i (1 - x_{v_{i-1}} - x_{v_i}) = 0.
    \end{align*}
    Thus, $y$ is a circulation in $D$ and we obtain $\sigma(\R^{V(G)}) \subseteq L$.
    To see that we indeed have $\sigma(\R^{V(G)}) = L$, notice that $\sigma(\R^{V(G)})$ and $L$ are linear subspaces, and hence it suffices to show that their dimensions coincide.
    To this end, we make use of Euler's formula for the projective plane, which yields
    \[
        |V(G)| = |E(G)| - |V(D)| + 1.
    \]
    Moreover, we need the basic fact that $\dim(L) = |E(D)| - |V(D)| + 1$.
    This implies
    \[
        \dim(L) = |E(G)| - |V(D)| + 1 = |V(G)| = \dim(\R^{V(G)}),
    \]
    as claimed.

    We next show that $\slack(G) \subseteq \slack'(G)$ holds.
    To this end, it suffices to show that for every $x \in \Z^{V(G)}$ with $Mx \le \onevec$ the vector $y = \sigma(x) = \onevec - Mx $ is contained in $\slack'(G)$.
    Clearly, we have $y \in \sigma(\R^{V(G)}) = L$ as well as $y \in \Z^{E(G)}_{\ge 0}$.
    It remains to show that $y(E(C))$ is odd.
    Let $e_1 = v_0 v_1,\, e_2 = v_1 v_2, \dots,\, e_{2k+1} = v_{2k} v_{2k+1}$ denote the edges of $C$, where $v_0 = v_{2k+1}$.
    We see that
    \[
        \sum_{i=1}^{2k} (-1)^i y_{e_i} = \sum_{i=1}^{2k} (-1)^i (1 - x_{v_{i-1}} - x_{v_i}) = 2x_{v_0} - 1
    \]
    is odd, and so is $\sum_{i=1}^{2k} y_{e_i} = y(E(C))$.

    It remains to show that $\slack'(G) \subseteq \slack(G)$ holds.
    To this end, it suffices to show that every $y \in L \cap \Z^{E(G)}_{\ge 0}$ with $y(E(C))$ odd is contained in $\slack(G)$.
    As $y \in L = \sigma(\R^{V(G)})$ there is some $x \in \R^{V(G)}$ with $\onevec - Mx = \sigma(x) = y$.
    The nonnegativity of $y$ implies $Mx \le \onevec$.
    It remains to show that $x$ is integer.
    As $y$ is integer and $y_{vw} = 1 - x_v - x_w$ for every edge $vw \in E(G)$, we see that $x_v$ is integer if $x_w$ is integer for any neighbor $w$ of $v$.
    Since $G$ is connected it thus suffices to show that $x_v$ is integer for a single $v \in V(G)$.
    This holds true since $2x_{v_0} - 1 = \sum_{i=1}^{2k} (-1)^i y_{e_i}$ is an odd integer, and hence $x_{v_0}$ is integer. \qed
\end{proof}

In view of Lemma~\ref{lem:odd_circulations} it remains to consider the following final lemma.

\begin{lemma} \label{lem:EF_odd_circulations} 
    Let $D$ be a directed graph with node set $V$ and arc set $A$, and let $X \subseteq A$.
    Then the convex hull of nonnegative integer circulations $y$ in $D$ with $y(X)$ odd admits an extended formulation of size $O(|V||A|)$.
\end{lemma}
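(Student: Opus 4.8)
The plan is to reduce to a cleaner object and then encode ``odd directed cycles'' via a parity double cover. Write
$\mathcal{O} := \{ y \in \Z^A_{\ge 0} : y \text{ is a circulation in } D,\ y(X) \text{ odd} \}$,
so the target is $\conv(\mathcal{O})$, and let $K := \{ y \in \R^A_{\ge 0} : y \text{ is a circulation in } D \}$, a cone cut out by $|A|$ inequalities and $|V|$ equations. Deleting isolated nodes affects neither $\mathcal{O}$ nor $\conv(\mathcal{O})$, so I may assume $D$ has no isolated node, hence $|V| \le 2|A|$; and if $A = \varnothing$ or $X = \varnothing$ then $\mathcal{O} = \varnothing$ and the trivial extended formulation works, so assume otherwise. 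The first step is the structural identity
\[
  \conv(\mathcal{O}) = \conv\{\, \chi_C : C \text{ an odd simple directed cycle of } D \,\} + K .
\]
This follows from flow decomposition: every nonnegative integer circulation is a sum of incidence vectors of simple directed cycles, and $y(X)$ is odd exactly when an odd number of these cycles use an odd number of arcs of $X$ (so at least one does); conversely $\chi_C + z \in \mathcal{O}$ for every such cycle $C$ and every $z \in K \cap \Z^A$. In particular $\conv(\mathcal{O})$ is a closed rational polyhedron with recession cone exactly $K$; here ``$K \subseteq \rec(\conv(\mathcal{O}))$'' uses that $y + 2z \in \mathcal{O}$ whenever $y \in \mathcal{O}$ and $z \in K \cap \Z^A$.

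Next I would pass to the parity double cover $\widetilde D$ of $(D,X)$: its nodes are $V \times \{0,1\}$, and each arc $a = (u,v)$ of $D$ yields two arcs of $\widetilde D$, namely $(u_0,v_0)$ and $(u_1,v_1)$ if $a \notin X$, and $(u_0,v_1)$ and $(u_1,v_0)$ if $a \in X$. Let $\pi : \R^{A(\widetilde D)} \to \R^A$ add up, for each arc of $D$, the two coordinates of its two lifts. Three facts do the work: (i) $\pi$ maps circulations of $\widetilde D$ to circulations of $D$; (ii) starting at $v_0$ and walking once around an odd simple directed cycle $C$ through $v$ traces a \emph{simple} $v_0$--$v_1$ path $P$ in $\widetilde D$ with $\pi(\chi_P) = \chi_C$; and (iii) every simple $v_0$--$v_1$ path $P$ of $\widetilde D$ projects onto a closed walk in $D$ using an odd number of arcs of $X$, so that $\pi(\chi_P) \in \mathcal{O}$. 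Let $F_v$ be the polyhedron of nonnegative unit $v_0$--$v_1$ flows in $\widetilde D$; by total unimodularity $F_v = \conv\{\chi_P : P \text{ a simple } v_0\text{--}v_1 \text{ path of } \widetilde D\} + K(\widetilde D)$, where $K(\widetilde D)$ is the circulation cone of $\widetilde D$ (if there is no such path, $F_v = \varnothing$). The main claim is
\[
  \conv(\mathcal{O}) = \conv\Big( \bigcup_{v \in V} \pi(F_v) \Big) + K .
\]
For ``$\supseteq$'': by (ii), $\chi_C \in \pi(F_v)$ for each $v \in V(C)$, so the polytope $\conv\{\chi_C\}$ lies in $\conv(\bigcup_v \pi(F_v))$, and adding $K$ finishes it. For ``$\subseteq$'': it is enough that $\pi(F_v) \subseteq \conv(\mathcal{O})$ for all $v$; applying $\pi$ to the Minkowski decomposition of $F_v$ and using (i) and (iii) gives $\pi(F_v) = \conv\{\pi(\chi_P)\} + \pi(K(\widetilde D)) \subseteq \conv(\mathcal{O}) + K = \conv(\mathcal{O})$.

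It then remains to count. Each $F_v$ lies in $\R^{A(\widetilde D)} = \R^{2|A|}$ and is described by $O(|A|+|V|)$ constraints; its projection $\pi(F_v)$ has an extended formulation of the same size, so $Q_v := \pi(F_v) + K$ has an extended formulation of size $O(|A|+|V|)$ and recession cone $K$. Since the nonempty $Q_v$ all have the same recession cone $K$, Balas' extended formulation for the convex hull of a union of polyhedra yields one for $\conv(\bigcup_v Q_v) = \conv(\mathcal{O})$ of size $\sum_{v \in V} O(|A|+|V|) + O(|V|) = O(|V|(|A|+|V|)) = O(|V|\,|A|)$, using $|V| \le 2|A|$. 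Every step is explicit (the double cover, the flow polyhedra, Balas' formulation), so this can be carried out in polynomial time. The point I expect to require the most care is the ``$\subseteq$'' direction of the main claim — verifying that projecting the auxiliary flow polyhedra $F_v$ back into $\R^A$ never produces a point outside $\conv(\mathcal{O})$ — together with the fact, needed for Balas' construction to stay linear in $|V|$ per piece, that all the $Q_v$ share the common recession cone $K$.
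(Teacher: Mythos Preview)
Your proposal is correct and follows essentially the same approach as the paper: the parity double cover $\widetilde D$, the unit $(v,0)$--$(v,1)$ flow polyhedra, the projection $\pi$ adding up the two lifts of each arc, and Balas' extended formulation for the union over $v \in V$. The only differences are presentational --- you first isolate the structural identity $\conv(\mathcal{O}) = \conv\{\chi_C : C \text{ odd}\} + K$ and then add the recession cone $K$ explicitly after taking the union, whereas the paper applies Balas directly to the flow polyhedra in the lifted space and then argues, via vertices and matching recession cones, that the projection equals $\conv(\mathcal{O})$.
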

\begin{proof}
    Let $P$ denote the convex hull of nonnegative integer circulations $y$ in $D$ with $y(X)$ odd.
    Consider the auxiliary graph $D'$ with node set $V' := V \times \{0,1\}$ and arcs from $(v,p)$ to $(w,p)$ for every $(v,w) \in A \setminus X$, $p \in \{0,1\}$ as well as arcs from $(v,p)$ to $(w,1-p)$ for every $(v,w) \in A \cap X$, $p \in \{0,1\}$.
    For each $v \in V$ let $Q_v$ denote the polyhedron of (uncapacitated) unit flows from $(v,0)$ to $(v,1)$ in $D'$.
    Moreover, let $Q$ denote the convex hull of the union of all $Q_v$ for $v \in V$.
    Recall that each $Q_v$ can be described using $|A'| = 2|A|$ linear inequalities (plus some linear equations) and hence, by applying Balas' theorem~\cite{Balas79}, we obtain an extended formulation for $Q$ of size $O(|V||A|)$.

    It remains to show that $P$ is a linear image of $Q$.
    To this end, consider the map $\pi : \R^{A'} \to \R^A$ defined via
    \begin{align*}
        \pi(z)_{(v,w)} & := z_{((v,0),(w,0))} + z_{((v,1),(w,1))} \quad \text{for every } (v,w) \in A \setminus X, \text{ and}\\
        \pi(z)_{(v,w)} & := z_{((v,0),(w,1))} + z_{((v,1),(w,0))} \quad \text{for every } (v,w) \in A \cap X.
    \end{align*}
    For each $v \in V$ consider $P_v := \pi(Q_v)$.
    The recession cones of $P$ and each $P_v$ are equal to the set of all nonnegative circulations in $D$, and hence the recession cones of $P$ and $\pi(Q)$ coincide.
    Thus, it suffices to show that every vertex of $\pi(Q)$ is contained in $P$ and every vertex of $P$ is contained in $\pi(Q)$.

    Let $y$ be a vertex of $\pi(Q)$.
    Then it is the image of a vertex $z$ of $P_v$ for some $v$.
    In particular, $z$ is an \emph{integer} unit flow from $(v,0)$ to $(v,1)$ in $D'$.
    It is now easy to check that $y$ is a nonnegative integer circulations in $D$ with $y(X)$ odd.

    Conversely, let $y$ be a vertex of $P$.
    Thus, it is a nonnegative integer circulation in $D$ with $y(X)$ odd.
    Moreover, it is the characteristic vector of a directed cycle in $D$.
    Indeed, decompose $y = y_1 + \dots + y_k$ into characteristic vectors of directed cycles in $D$.
    As $y(X)$ is odd, we may assume that $y_1(X)$ is odd.
    In particular $y_1 \in P$.
    Note that $y_2 + \dots + y_k$ is in the recession cone of $P$ and hence, since $y$ is a vertex of $P$ we must have $y_2 + \dots + y_k = \zerovec$.

    Suppose that node $v$ is contained in the cycle corresponding to $y$.
    Then it is easy to see that $y$ is the image (under $\pi$) of the characteristic vector of a path from $(v,0)$ to $(v,1)$ in $D'$, and hence $y \in \pi(Q_v) \subseteq \pi(Q)$. \qed
\end{proof}

We are ready to prove the final result of this section. 

%\todo[inline]{Conclude how we obtain Theorem~\ref{thm:EFcore}. Is $2$-connectivity needed somewhere? If not, then we only have to distinguish between bipartite and non-bipartite. Also, remind the reader that $|E| = O(n)$.}

\begin{theorem} \label{thm:EFcore}
    Let $G$ be an $n$-node graph that is even-face embedded in the projective plane.  Then $\STAB (G)$ has a size-$O(n^2)$ extended formulation.
\end{theorem}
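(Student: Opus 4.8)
The plan is to assemble the machinery developed earlier in this section. Starting from an $n$-node graph $G$ that is even-face embedded in the projective plane, first observe that such a graph is $2$-connected (as remarked after the definition of even-face embeddings), and in particular it has no bipartite component. I would split into two cases according to whether $G$ is bipartite or not. If $G$ is bipartite, then $\STAB(G)$ is already described by nonnegativity and edge inequalities (since the edge-node incidence matrix of a bipartite graph is totally unimodular), giving a trivial linear-size formulation. So assume $G$ is non-bipartite. Then by Lemma~\ref{lemStabVsSub} we have $\STAB(G) = \sub(G) \cap [0,1]^{V(G)}$, so it suffices to produce an $O(n^2)$-size extended formulation of $\sub(G)$ and then intersect with the box $[0,1]^{V(G)}$, which adds only $2n$ inequalities.

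Next, since $G$ is non-bipartite it has no bipartite component, so the affine map $\sigma(x) = \onevec - Mx$ is injective; hence $\sub(G)$ is affinely isomorphic to $\slack(G) = \sigma(\sub(G))$, and an extended formulation of $\slack(G)$ of a given size yields one of $\sub(G)$ of the same size (composing the extension with the affine map $\sigma^{-1}$). By Lemma~\ref{lemQofGWithCirculations}, the dual graph $G^*$ admits an alternating orientation $D$; fix one, together with an arbitrary odd cycle $C$ of $G$ (which exists as $G$ is non-bipartite). By Lemma~\ref{lem:odd_circulations}, under the natural bijection between $E(G)$ and $A(D)$ we have
\[
    \slack(G) = \conv\{ y \in \Z^{E(G)}_{\ge 0} \mid y \text{ is a circulation in } D \text{ and } y(E(C)) \text{ is odd}\}.
\]
Now apply Lemma~\ref{lem:EF_odd_circulations} with $V = V(D)$, $A = A(D)$, and $X = E(C)$: this yields an extended formulation of $\slack(G)$ of size $O(|V(D)|\cdot|A(D)|)$.

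Finally I would bound this size in terms of $n$. We have $|A(D)| = |E(G)|$, and since $G$ is even-face embedded, every face is bounded by a cycle of length at least $3$ — in fact, since $G$ is loopless and simple after the $2$-connectivity remark, faces have even length $\ge 4$, but even the crude bound suffices; by Euler's formula for the projective plane $|V(G)| - |E(G)| + |F(G)| = 1$ together with $2|E(G)| \ge 3|F(G)|$ (or $\ge 4|F(G)|$), we get $|E(G)| = O(n)$ and $|V(D)| = |F(G)| = O(n)$. Hence the extended formulation of $\slack(G)$ has size $O(n^2)$, and transporting it back through $\sigma^{-1}$ and intersecting with $[0,1]^{V(G)}$ gives a size-$O(n^2)$ extended formulation of $\STAB(G)$. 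The main obstacle is not really an obstacle so much as bookkeeping: one must be careful that every cited lemma's hypotheses are met (non-bipartiteness for Lemmas~\ref{lemQofGWithCirculations} and~\ref{lem:odd_circulations}, invertibility of $\sigma$) and that the size bound is genuinely quadratic rather than cubic, which is exactly why the planar/Euler bound $|E(G)|, |F(G)| = O(n)$ is needed rather than the trivial $O(n^2)$ edge bound that would give $O(n^3)$.
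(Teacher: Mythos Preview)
Your proposal is correct and follows essentially the same route as the paper's proof: reduce to the non-bipartite case, pass from $\STAB(G)$ to $\sub(G)$ via Lemma~\ref{lemStabVsSub}, then to $\slack(G)$ via the affine map $\sigma$, invoke Lemmas~\ref{lemQofGWithCirculations}, \ref{lem:odd_circulations}, and~\ref{lem:EF_odd_circulations} to get a size-$O(|V(D)|\,|A(D)|)$ formulation, and bound this by $O(n^2)$ using $|E(G)|=O(n)$. The paper asserts $|E(G)|=O(n)$ without spelling out the Euler-formula computation you give, but otherwise the arguments coincide.
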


\begin{proof}
We may assume that $G$ is non-bipartite since the statement is trivial otherwise. By
Lemma \ref{lemQofGWithCirculations}, $G^*$ has an alternating orientation. As before, we denote this orientation by $D$. 

Next, we use Lemmas~\ref{lem:odd_circulations} and \ref{lem:EF_odd_circulations} to obtain an extended formulation for $\slack(G)$. The size of this formulation is $O(|V(D)||A(D)|) = O(n^2)$, since $|A(D)|=|E(G)|=O(n)$. This extended formulation automatically yields one for $\sub(G)$, since $\slack(G)$ and $\sub(G)$ are affinely equivalent. Finally, we get a size-$O(n^2)$ extended formulation for $\STAB(G)$ by adding the $2n$ inequalities $0 \leqslant x_v \leqslant 1$ for $v \in V(G)$, and invoking Lemma~\ref{lemStabVsSub}.
\qed
\end{proof}

% ---
% ---
% ---

\section{The general case}
\label{secGeneral_case}

In this section, we describe how Theorem~\ref{thm:main} can be proven using Theorems~\ref{thmStructure} and \ref{thm:EFcore}. Let $ G $ be a graph with $ \ocp(G) = 1 $. If $ \oct(G) \le 3 $, then $ \stab(G) $ has a linear-size extended formulation by Balas' theorem~\cite{Balas79}. Otherwise, $ \oct(G) \ge 4 $ and $ G $ can be decomposed as in Theorem~\ref{thmStructure}. In particular, $ G $ is the union of graphs $ H_0, T_1, \dots, T_\ell $ where $ H_0 $ has an even-face embedding in the projective plane and $ T_1, \dots, T_\ell $ are bipartite. Although the stable set polytopes of $ H_0,T_1,\dots,T_\ell $ admit small extended formulations and each $ T_i $ intersects $ H_0 \cup_{j \ne i} T_j $ in at most three nodes, it is not obvious how to obtain a small extended formulation for $ \stab(G) $. However, in some cases it is possible to use linear descriptions of the stable set polytopes of graphs $G_1,G_2$ to obtain a description of $ \stab(G_1 \cup G_2) $, provided that $G_1 \cap G_2$ has a specific structure, see \cite{Chvatal,ConfortiGP,BarahonaM}.

With this idea in mind, recall that not only $ H_0 $ but also the signed graph $ (H^+, \Sigma) $ has an even-face embedding in the projective plane. We will replace each signed clique used to define $ (H^+, \Sigma) $ by a constant size gadget $ H_i $ corresponding to each $ T_i $ in a way that the resulting graph $ G^{(\ell)} := H_0 \cup H_1 \cup \dots \cup H_\ell $ (the ``core'') still has an even-face embedding in the projective plane. Moreover, each $ T_i' := T_i \cup H_i $ will still be bipartite. In this way $ G $ is obtained from $ G^{(\ell)} $ by iteratively performing $ k $-sums with $ T_1',\dots,T_\ell' $ along $H_1, \dots, H_\ell$. In each such operation, the specific choice of the gadget will allow us to relate the extension complexities of the stable set polytopes of the participating graphs in a controlled way. Let us start with describing the gadgets that will be used.

%In this section, we prove Theorem~\ref{thm:main} from Theorems~\ref{thmStructure} and \ref{thm:EFcore}. Using the notation of Theorem~\ref{thmStructure}, we modify $G$ by iteratively replacing each $T_i$ by some constant-size ``gadget'' that simulates $T_i$. We prove a single statement about small separations that controls how the extension complexity changes each time such a replacement is performed, see Lemma~\ref{lem:xcbound} below. When all the replacements are performed, we obtain a graph that has an even-face embedding in the projective plane (the core), for which Theorem~\ref{thm:EFcore} applies. This yields the desired upper bound on the extension complexity of $\STAB(G)$.

\begin{definition} \label{defngadget}
A \emph{gadget} is a graph isomorphic to $P_3$, $P_4$, $S_{2,2,2}$ or $S_{2,3,3}$, see Figure~\ref{fig:gadgets}. Let $G$ be a graph with a linked $k$-separation $(G_0,G_1)$ such that $k \in \{2,3\}$ and $G_1$ is bipartite.  We say that a gadget $H$ is \emph{attachable} to $G_1$ (with respect to separation $(G_0,G_1)$) if its set of leaf nodes equals $V(G_0) \cap V(G_1)$, its set of non-leaf nodes is disjoint from $V(G)$, and $G_1 \cup H$ is bipartite.
\end{definition}

Note that if $G$ is a graph with a linked $k$-separation $(G_0,G_1)$ such that $k \in \{2,3\}$ and $G_1$ is bipartite, then there is a unique gadget $ H \in \{P_3, P_4, S_{2,2,2}, S_{2,3,3}\} $ that is attachable to $ G_1 $.

\begin{figure}[ht]
\centering
\begin{tikzpicture}[inner sep=2pt,scale=.5]
\tikzstyle{vtx}=[circle,draw,thick,fill=black!5]
\coordinate (v1) at (0:0);
\coordinate (v2) at (90:2);
\coordinate (v12) at ($ (v1) !.5! (v2) $);
\draw[thick] (v1) -- (v2);
\node[vtx] at (v1) {};
\node[vtx] at (v2) {};
\node[vtx] at (v12) {};
\draw (v1) ++(0.25,-.625) node {$P_3$};
\end{tikzpicture}
\qquad%%%
\begin{tikzpicture}[inner sep=2pt,scale=.5]
\tikzstyle{vtx}=[circle,draw,thick,fill=black!5]
\coordinate (v1) at (0:0);
\coordinate (v2) at (90:2);
\coordinate (v112) at ($ (v1) !.333! (v2) $);
\coordinate (v122) at ($ (v2) !.333! (v1) $);
\draw[thick] (v1) -- (v2);
\node[vtx] at (v1) {};
\node[vtx] at (v2) {};
\node[vtx] at (v112) {};
\node[vtx] at (v122) {};
\draw (v1) ++(0.25,-.625) node {$P_4$};
\end{tikzpicture}
\qquad%%%
\begin{tikzpicture}[inner sep=2pt,scale=.5]
\tikzstyle{vtx}=[circle,draw,thick,fill=black!5]
\coordinate (v0) at (0:0);
\coordinate (v1) at (90:2);
\coordinate (v2) at (210:2);
\coordinate (v3) at (330:2);
\coordinate (v01) at ($ (v0) !.5! (v1) $);
\coordinate (v02) at ($ (v0) !.5! (v2) $);
\coordinate (v03) at ($ (v0) !.5! (v3) $);
\draw[thick] (v0) -- (v1)  (v0) -- (v2)  (v0) -- (v3);
\node[vtx] at (v0) {};
\node[vtx] at (v1) {};
\node[vtx] at (v2) {};
\node[vtx] at (v3) {};
\node[vtx] at (v01) {};
\node[vtx] at (v02) {};
\node[vtx] at (v03) {};
\draw (v0) ++(0,-1.5) node {$S_{2,2,2}$};
\end{tikzpicture}
\qquad%%%
\begin{tikzpicture}[inner sep=2pt,scale=.5]
\tikzstyle{vtx}=[circle,draw,thick,fill=black!5]
\coordinate (v0) at (0:0);
\coordinate (v1) at (90:2);
\coordinate (v2) at (210:2);
\coordinate (v3) at (330:2);
\coordinate (v01) at ($ (v0) !.5! (v1) $);
\coordinate (v002) at ($ (v0) !.333! (v2) $);
\coordinate (v022) at ($ (v0) !.666! (v2) $);
\coordinate (v003) at ($ (v0) !.333! (v3) $);
\coordinate (v033) at ($ (v0) !.666! (v3) $);
\draw[thick] (v0) -- (v1)  (v0) -- (v2)  (v0) -- (v3);
\node[vtx] at (v0) {};
\node[vtx] at (v1) {};
\node[vtx] at (v2) {};
\node[vtx] at (v3) {};
\node[vtx] at (v01) {};
\node[vtx] at (v002) {};
\node[vtx] at (v022) {};
\node[vtx] at (v003) {};
\node[vtx] at (v033) {};
\draw (v0) ++(0,-1.5) node {$S_{2,3,3}$};
\end{tikzpicture}
\caption{Gadgets and their names.} \label{fig:gadgets}
\end{figure}
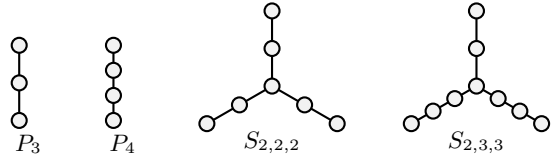

Next, let us formally describe how the signed cliques used to define $ (H^+, \Sigma) $ are replaced by gadgets in order to obtain the core.

\begin{definition} \label{defncore}
Let $ G $ be a $2$-connected graph with star structure $ H_0, T_1,\dots,T_\ell $. For each $i \in [\ell]$, pick a gadget $ H_i $ that is attachable to $ T_i $ with respect to the separation $(H_0 \cup \bigcup_{j \neq i} T_j, T_i)$. (We always assume that the set of non-leaf nodes of the gadgets $H_i$, $i \in [\ell]$ are mutually disjoint.) We call the graph  $ H_0 \cup H_1 \cup \dots \cup H_\ell$ the \emph{core}.%\todo{draw picture}
\end{definition}

\begin{lemma} \label{lem:combstructure}
    Every $2$-connected graph $ G $ with $ \ocp(G) = 1 $ and $ \oct(G) \ge 4 $ admits a star structure whose core has an even-face embedding in the projective plane.
\end{lemma}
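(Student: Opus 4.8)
The plan is to combine Theorem~\ref{thmStructure} with the observation in Definition~\ref{defncore} that gadgets can be attached to replace signed cliques, and then to verify that attaching gadgets in place of the $(K_i,\Sigma_i)$ preserves the even-face embedding in the projective plane. Concretely, I would first invoke Theorem~\ref{thmStructure}: since $G$ is $2$-connected, $\ocp(G)=1$, and $\oct(G)\ge 4$, there is a star structure $H_0,T_1,\dots,T_\ell$ such that the associated signed graph $(H^+,\Sigma)$ from Definition~\ref{defnHStar} has an even-face embedding $\Pi$ in the projective plane, and each $S_i=V(H_0)\cap V(T_i)$ lies on the boundary of a face $F_i$ of $\Pi$. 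Recall that $H^+=H_0\biguplus K_1\biguplus\dots\biguplus K_\ell$, where $(K_i,\Sigma_i)$ is a signed clique on $S_i$ with $|S_i|\le 3$ chosen so that $(K_i\biguplus T_i,\Sigma_i\biguplus E(T_i))$ is balanced.

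The key step is to replace, one at a time, each signed clique $(K_i,\Sigma_i)$ embedded inside the face $F_i$ by the unique gadget $H_i\in\{P_3,P_4,S_{2,2,2},S_{2,3,3}\}$ that is attachable to $T_i$ with respect to the separation $(H_0\cup\bigcup_{j\ne i}T_j,T_i)$, placing the non-leaf nodes of $H_i$ in the interior of $F_i$. I would argue that this can be done while maintaining an even-face embedding in the projective plane, by a local surgery argument: since all of $S_i$ lies on the boundary of the disk face $F_i$ and $|S_i|\le 2$ or $|S_i|=3$, the gadget $H_i$ (a path or a spider with at most three legs) can be drawn inside $F_i$ with its leaves at the nodes of $S_i$, subdividing $F_i$ into at most $|S_i|$ new faces. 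One must check two things. First, each new face bounded by a gadget path/leg together with a boundary arc of $F_i$ is an even cycle: here I would use that the gadgets $P_3,P_4,S_{2,2,2},S_{2,3,3}$ are exactly designed so that, together with the parity data encoded in $\Sigma_i$ (which matches the bipartition of $T_i$, hence the bipartition structure relevant to $S_i$ within $H^+$), the new cycles are $\Sigma$-even; and by Lemma~\ref{2sided} applied to the core, $\Sigma$-even is the same as even once we check the core is non-bipartite (if the core were bipartite, $\STAB$ is trivial and there is nothing to prove). Second, replacing $(K_i,\Sigma_i)$ by $H_i$ does not change the parity of any cycle passing through $S_i$, because both the clique edges $uv$ (or the two clique edges $uv,vw$) and the corresponding gadget $u$--$v$ (resp. $u$--$w$) path have the same $\Sigma$-parity as a $u$--$v$ (resp. $u$--$w$) path through $T_i$, by the balancedness condition in Definitions~\ref{defnHStar} and~\ref{defngadget}. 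Carrying this out for all $i\in[\ell]$ yields the core $G^{(\ell)}=H_0\cup H_1\cup\dots\cup H_\ell$ with an even-face embedding in the projective plane.

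I would also need to be slightly careful that the faces $F_1,\dots,F_\ell$ can be chosen distinct, or at least that two star components whose attachment sets lie on a common face can be handled by drawing their gadgets in disjoint sub-regions of that face; the first bullet of Theorem~\ref{thmStructure} ($S_i\not\subseteq S_j$) together with $2$-connectivity of $(H^+,\Sigma)$ limits how $S_i$ and $S_j$ can interact on a common face, and in any case a path/spider gadget occupies an arbitrarily thin neighborhood of a tree inside $F_i$, so disjointness of the non-leaf nodes (as stipulated in Definition~\ref{defncore}) can always be arranged.

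The main obstacle I expect is the parity bookkeeping in the local surgery: proving rigorously that the specific four gadgets suffice to realize, for every possible signature $\Sigma_i$ on a clique of size $2$ or $3$ with the prescribed balancedness, an even-face replacement — i.e. that after subdividing $F_i$ the new facial cycles are all even. This amounts to a short finite case analysis over the isomorphism types of $(K_i,\Sigma_i)$ (there are only a few, since $|S_i|\le 3$) matched against the four gadget types, using that the gadget legs have the right lengths ($2$ or $3$) to flip or preserve parity as needed; the uniqueness remark following Definition~\ref{defngadget} is exactly the statement that this matching is well-defined, so the content is to verify it respects face parities. Everything else — existence of the star structure, the disk structure of $F_i$, and the fact that local modifications inside a face preserve the projective-planar even-face property — is routine given the results already established.
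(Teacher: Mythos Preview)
Your proposal is correct and follows exactly the route the paper intends: invoke Theorem~\ref{thmStructure} to obtain the star structure with $(H^+,\Sigma)$ even-face embedded and each $S_i$ on a face, then replace each signed clique $(K_i,\Sigma_i)$ by the matching gadget $H_i$ inside that face and check the resulting facial cycles are even. The paper's own proof is a single sentence declaring this ``immediate'' from Theorem~\ref{thmStructure}; what you have written is precisely the local-surgery and parity bookkeeping that justifies that word, and the finite case check you anticipate (legs of length $2$ versus $3$ matched to whether the corresponding $K_i$-edge lies in $\Sigma_i$) goes through exactly as you describe.
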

\begin{proof}
The proof is immediate by choosing a star structure that satisfies Theorem~\ref{thmStructure}. \qed
\end{proof}

The remaining ingredient for our proof of Theorem~\ref{thm:main} will be the following result. To this end, let $(G_0,G_1)$ be a separation of graph $G$. Below, for $i \in \{0,1\}$, we call a vertex \emph{internal} if it belongs to $V(G_i) \setminus V(G_{1-i})$ and an edge of $G_i$ \emph{internal} if at least one of its ends is not in $G_{1-i}$.

\begin{theorem} \label{thm:xcbound}
Let $G$ be a $2$-connected, non-bipartite graph. Assume that $G$ has a $k$-separation $(G_0,G_1)$ such that $G_1$ is bipartite, and $k \in \{2,3\}$. Let $\mu_1$ denote the number of internal vertices and edges of $G_1$. Let $H$ be a gadget that is attachable to $G_1$, and let $G'_0 := G_0 \cup H$. Then
$$
\xc(\STAB(G)) \leqslant \xc(\STAB(G'_0)) + O(\mu_1)\,.
$$
\end{theorem}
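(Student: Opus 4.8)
The plan is to relate stable sets of $G$ to stable sets of $G_0' = G_0 \cup H$ by exploiting that both $G_1$ and $H$ are bipartite graphs attached along the clique $S := V(G_0) \cap V(G_1)$, with $|S| = k \in \{2,3\}$. The key observation is that, because the $k$-separation is linked and $G_1$ is bipartite, the ``interface behaviour'' of $G_1$ on $S$ is very limited: for a bipartite graph glued along a small clique, the relevant information is captured by a low-dimensional polytope recording which subsets of $S$ extend to stable sets of $G_1$ (and with what weight-optimal values). Since $|S| \le 3$, there are only a constant number of stable subsets of $S$ (at most $4$ when $k=2$, at most $5$ when $k=3$ — exactly the sizes reflected in the gadget choices $P_3, P_4, S_{2,2,2}, S_{2,3,3}$). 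I would first prove a lemma stating that $\stab(G_1)$, projected to $\R^{S}$, together with its facial structure over $S$, is ``simulated'' by $\stab(H)$: more precisely, that there is a way to express $\stab(G)$ as a subset-sum/gluing of $\stab(G_0)$ with $\stab(G_1)$, and that replacing $G_1$ by the gadget $H$ changes this gluing only by a controlled affine transformation. The gadget $H$ is engineered precisely so that its own stable set polytope, projected to the leaf set $S$, realizes the same combinatorial pattern of extendable subsets as $G_1$ does.

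Concretely, I would proceed as follows. First, using Lemma~\ref{lemStabVsSub} and the edge-space viewpoint ($\slack = \sigma(\sub)$), reduce to understanding $\sub(G)$ in terms of $\sub(G_0)$ and $\sub(G_1)$; a clean gluing formula should say that a point $x$ lies in $\sub(G)$ iff its restrictions lie in $\sub(G_0)$ and $\sub(G_1)$ respectively (the edge constraints decompose), so the only coupling is through the shared coordinates on $S$ plus the requirement that both restrictions be simultaneously realizable as convex combinations — and here the parity obstruction from odd cycles in $G_1$ matters. Since $G_1$ is bipartite, $\sub(G_1)$ has a nice description, and I would show that for the purpose of describing $\stab(G)$ one only needs, as a ``black box'' for the $G_1$-side, an extended formulation of the polytope $R_1 \subseteq \R^S \times \R$ whose points are $(x|_S, \text{weight-value})$ pairs; this has size $O(\mu_1)$ by a direct construction (it is essentially a bipartite optimization polytope, describable via LP duality / flows in $O(\mu_1)$ inequalities). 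Then I would show $\stab(H)$ realizes the same $R_1$ after an affine reparametrization: this is a finite case check over the constant-size gadgets against the constant-size list of possible interface polytopes $R_1$, which is where the specific four gadgets come from. Finally, assembling: $\stab(G)$ has an extended formulation built from one for $\stab(G_0')$ (which includes the gadget $H$, hence already encodes $R_1$ in its $O(1)$-size piece corresponding to $H$) plus the $O(\mu_1)$ inequalities describing how the internal part of $G_1$ sits over $S$ — more carefully, one writes $\stab(G)$ as (a projection of) a fiber product of $\stab(G_0')$ with $\stab(T_1')$-type data over the gadget, using a Chvátal/Barahona–Mahjoub-style gluing lemma, but with the twist that some clique edges may be deleted.

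The main obstacle I anticipate is the gluing step itself: the paper explicitly warns that performing $k$-sums does \emph{not} in general preserve small extended formulations, even for $k=2$, and that their $k$-sums allow deletion of clique edges, which breaks Chvátal's clique-cutset lemma. So the heart of the argument must use bipartiteness of $G_1$ in an essential way — presumably to argue that $\stab(G_1)$ has a ``product-like'' or TU-based structure over $S$ that does admit a clean projection-preserving description, so that the gluing with $\stab(G_0')$ along the gadget $H$ goes through with only an additive $O(\mu_1)$ overhead. Making this precise — identifying exactly the right polytope to pass between the two sides, proving it is both simulated by the gadget and has an $O(\mu_1)$ extended formulation on the $G_1$-side, and verifying the gluing is exact (no spurious integer points introduced by the affine transfer, and the parity condition handled correctly since $G$ is non-bipartite but $G_1$ is bipartite) — is where the real work lies, and where the specific structure guaranteed by Theorem~\ref{thmStructure} (the $S_i$ on face boundaries, the non-containment of the $S_i$) will be needed to keep everything consistent when the $k$-sums are performed iteratively.
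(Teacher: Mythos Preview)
Your proposal has the right instinct --- glue $\STAB(G_0)$ and $\STAB(G_1)$ along $S$ and use the gadget as a proxy --- but it misses the specific mechanism that makes the gluing exact, and the substitute you sketch would not work.

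First, the ``clean gluing formula'' you posit fails already for $\sub$: the fiber product $\{x : x|_{V(G_0)} \in \sub(G_0),\ x|_{V(G_1)} \in \sub(G_1)\}$ is in general strictly larger than $\sub(G)$ (take $G = K_3$ split into an edge and a $2$-path). Likewise, matching the projections of $\STAB(H)$ and $\STAB(G_1)$ to $\R^S$ is not sufficient to conclude anything about $\STAB(G)$; projections discard exactly the coupling information you need. Your ``interface polytope'' $R_1 \subseteq \R^S \times \R$ is also ill-defined as stated, since it depends on a weight function that is not yet fixed.

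The paper's actual device is different and sharper. The gadget $H$ is attached to \emph{both} sides, forming $G_0' = G_0 \cup H$ and $G_1' = G_1 \cup H$, and the fiber product is taken not with $\STAB(G_1')$ but with the smaller polytope $\overline{\STAB}(G_1')$: the convex hull of stable sets of $G_1'$ having \emph{at most one slack edge} in $H$. The extended formulation is then literally
\[
\STAB(G) = \{(x^0,x^1,x^{01}) : \exists\, x^H \text{ with } (x^0,x^{01},x^H) \in \STAB(G_0'),\ (x^1,x^{01},x^H) \in \overline{\STAB}(G_1')\}.
\]
The ``at most one slack edge'' condition is the whole point: it forces the gadget variables $x^H$ to act as a faithful coupling between the two sides. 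Proving this equality is where the work lies. One direction is easy; for the other, the paper reduces (via Lemma~\ref{lem:edge_induced_weights}) to edge-induced objective functions, defines costs $c^H(f) := \gamma(\{f\})$ on the gadget edges where $\gamma(F)$ is the minimum $c$-cost of slack edges in $G_1$ over stable sets of $G_1'$ with slack set $F$ on $H$, proves $\gamma$ is subadditive (Lemma~\ref{lem:subadditivity}), and then shows by a short case analysis over each gadget that an optimal vertex of $\sigma(\STAB(G_0'))$ can always be taken with at most one slack gadget edge. None of this is a ``finite case check that $H$ and $G_1$ have the same interface polytope''; the gadget does not simulate $G_1$ in that sense.

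Finally, your last paragraph is off: Theorem~\ref{thm:xcbound} is proved standalone and uses nothing from Theorem~\ref{thmStructure}. The face-boundary and non-containment conditions on the $S_i$ are used only to embed the core in the projective plane, not for the $k$-sum bound.
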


Before we continue with the proof of Theorem~\ref{thm:xcbound} in the next section, let us see how this yields a proof of our main result. 
\begin{proof}[Proof of Theorem~\ref{thm:main}]
By induction on the number of nodes $n$, we may assume that $G$ is $2$-connected. Indeed, suppose that $G$ has a $k$-separation $(G_0,G_1)$ with $k \in \{0,1\}$. For $i \in \{0,1\}$, let $n_i := |V(G_i)|$. Thus $n = n_0 + n_1 - k$. If $c$ is any constant such that $\xc(\STAB(G_i)) \leqslant c \cdot n_i^2$ for $i \in \{0,1\}$, we get
$$
\xc(\STAB(G)) \leqslant \xc(\STAB(G_0)) + \xc(\STAB(G_1)) \leqslant c \cdot n_0^2 + c \cdot n_1^2 \leqslant c \cdot n^2\,.
$$

As observed above, if $\oct(G) \leqslant 3$ then $\STAB(G)$ trivially has a size-$O(n^2)$ extended formulation. Now assume that $\ocp(G) = 1$ and $\oct(G) \geqslant 4$. Let $H_0, T_1, \ldots, T_\ell$ be a star structure of $G$ as in Lemma~\ref{lem:combstructure}. Since $G$ is $2$-connected, each separation $(H_0 \cup_{j \neq i} T_j, T_i)$ is either a $2$- or a $3$-separation. For each $i \in [\ell]$, we consider the graph
$$
G^{(i)} := H_0 \cup H_1 \cup \cdots \cup H_{i} \cup T_{i+1} \cup \dots \cup T_{\ell}\,.
$$
where $H_i$ denotes a gadget attachable to $T_i$. For $i \in [\ell]$, let $\mu_i$ denote the number of internal vertices and edges of $T_i$. Notice that $G^{(\ell)}$ is the core, and thus by Lemma \ref{lem:combstructure} has an even-face embedding in the projective plane. By Theorem~\ref{thm:xcbound},
$$
\xc(\STAB(G^{(i-1)})) \leqslant \xc(\STAB(G^{(i)})) + O(\mu_i)\,.
$$
Since $|V(G^{(\ell)})| = O(n)$, Theorem~\ref{thm:EFcore} implies $\xc(\STAB(G^{(\ell)})) = O(n^2)$. Since moreover $\sum_{i=1}^\ell \mu_i \leqslant |V(G)| + |E(G)| = O(n^2)$, we have
$$
\xc(\STAB(G)) = 
\xc(\STAB(G^{(0)}))
\leqslant
\xc(\STAB(G^{(\ell)})) + O\left(\sum_{i=1}^\ell \mu_i \right) = O(n^2)\,. \eqno \qed
$$
\end{proof}

\section{Extended formulation for small separations}
\label{secProofEF}

In this section we describe an extended formulation that yields the bound claimed in Theorem~\ref{thm:xcbound}. Given a stable set $S$ in a graph $G$, we say that an edge is \emph{slack} if neither of its ends is in $S$. We denote by $\sigma(S)$ the set of slack edges, or $\sigma_G(S)$ should the graph not be clear from the context. An edge is said to be \emph{tight} if it is not slack.

\begin{lemma} \label{lem:EF}
Let $G$, $G_0$, $G_1$ and $H$ be as in Theorem~\ref{thm:xcbound}. Letting $\overline{\STAB}(G_1')$ denote the convex hull of characteristic vectors of stable sets $S$ in $G_1'$ having at most one slack edge in $H$, we have
\begin{equation} \label{eq:EF}
\STAB(G) = \{ (x^0, x^1, x^{01}) \in \R^{V(G)} \mid \exists x^H : 
\begin{array}[t]{l}
(x^0, x^{01}, x^H) \in \stab(G_0'),\\[1ex]
(x^1, x^{01}, x^H) \in \overline{\STAB}(G_1') \}.
\end{array}
\end{equation}
where $ x^0 \in \R^{V(G_0) \setminus V(G_1)} $, $ x^1 \in \R^{V(G_1) \setminus V(G_0)} $, $ x^{01} \in \R^{V(G_0) \cap V(G_1)} $ and $ x^H \in \R^{V(H) \setminus V(G)} $.
\end{lemma}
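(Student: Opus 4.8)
The plan is to prove the set equality in~\eqref{eq:EF} by double inclusion, working first with $0/1$-points and then extending to convex hulls. The key conceptual point is that a stable set of $G$ restricts to a stable set of $G_0$ and a stable set of $G_1$ that agree on the separator $V(G_0)\cap V(G_1)$, and conversely two such stable sets glue back together; the role of the gadget $H$ and of $\overline{\STAB}(G_1')$ is to carry just enough information across the separator so that the glued object is again stable.

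For the inclusion ``$\subseteq$'', I would start from a stable set $S$ of $G$ and let $S_0 := S\cap V(G_0)$, $S_1 := S\cap V(G_1)$. These are stable in $G_0$ and $G_1$ respectively and agree on the separator. The slack edges of $S$ in $G_1$ that lie on the separator-clique determine a pattern; since $(G_0,G_1)$ is a $k$-separation with $k\le 3$ and $G_1$ is bipartite, the crucial observation (this is where attachability of the gadget $H$ is used) is that the possible ``slack patterns'' on the separator that arise from stable sets of the bipartite graph $G_1$ are exactly those realizable by stable sets of $H$, and moreover a stable set of $G_1$ has at most one slack edge among the separator edges. Using this, I would choose a stable set $S_H$ of $H$ that induces the same slack pattern on the separator leaves as $S$ does, set $S_0' := S_0\cup S_H$ (stable in $G_0'=G_0\cup H$ because $H$ shares only its leaves with $G_0$, and $S_H$ matches $S_0$ there, and because the non-leaf vertices of $H$ are new), and set $S_1' := S_1\cup S_H$ (stable in $G_1'$, with at most one slack edge in $H$ by construction). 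Taking $x^H$ to be the restriction of $\chi^{S_H}$ to $V(H)\setminus V(G)$ then witnesses membership of $\chi^S$ in the right-hand side. Passing to convex hulls, the right-hand side is a polytope (bounded, since all coordinates lie in $[0,1]$) described as a projection of a fiber product of two polytopes, and I would argue that it suffices to check the inclusion on vertices, which are characteristic vectors.

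For the reverse inclusion ``$\supseteq$'', I would take a vertex $(x^0,x^1,x^{01})$ of the right-hand side together with a witness $x^H$, and argue as usual that at a vertex we may take the witnessing decompositions inside $\stab(G_0')$ and $\overline{\STAB}(G_1')$ to be integral, so that $(x^0,x^{01},x^H)=\chi^{S_0'}$ for a stable set $S_0'$ of $G_0'$ and $(x^1,x^{01},x^H)=\chi^{S_1'}$ for a stable set $S_1'$ of $G_1'$ with at most one slack edge in $H$. The two agree on $V(G_0')\cap V(G_1') = (V(G_0)\cap V(G_1))\cup(V(H)\setminus V(G))$, hence in particular on the separator and on the gadget. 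I then set $S := (S_0'\cup S_1')\cap V(G)$ and must check $S$ is stable in $G$. The only edges to worry about are edges of $G$ between the separator and $V(G_1)\setminus V(G_0)$, or within the separator; edges internal to $G_0$ or $G_1$ are fine. Here is where the ``at most one slack edge in $H$'' condition together with the structure of the gadget pays off: the slack pattern of $S_1'$ on the separator is controlled by its (at most one) slack edge in $H$, and this must be compatible with the separator edges actually present in $G$, ruling out an edge of $G$ with both ends outside $S$. Finally I would note that the right-hand side is contained in $[0,1]^{V(G)}$ and its vertices are $0/1$, so this vertex-wise argument suffices.

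The main obstacle I expect is the careful case analysis identifying which slack patterns on the $k\le 3$ separator vertices are realizable, simultaneously, by stable sets of the bipartite side $G_1$ and by stable sets of the attached gadget $H$, and verifying that exactly the right gadget among $P_3,P_4,S_{2,2,2},S_{2,3,3}$ does the job in each case (including the ``at most one slack edge'' bookkeeping). This is essentially a finite check driven by the bipartiteness of $G_1$ and the linkedness of the separation, but it is the technical heart of the argument; everything else is the routine vertex/integrality manipulation of fiber products of polytopes. I would isolate the gadget-realizability statement as a small auxiliary claim before proving the lemma, so that the proof of~\eqref{eq:EF} itself reduces to the gluing argument sketched above.
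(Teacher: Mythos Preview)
Your ``$\subseteq$'' direction is fine and matches the paper. The gap is in ``$\supseteq$''. You write that at a vertex $(x^0,x^1,x^{01})$ of the right-hand side, ``as usual'' the witness $x^H$ and both decompositions may be taken integral. This is not routine, and in fact it is the entire content of the lemma. The right-hand side is the projection of a \emph{fiber product} of two integral polytopes over the shared coordinates $(x^{01},x^H)$; such fiber products are not integral in general, and a vertex of the projection need not lift to a vertex of either factor, let alone to a common $0/1$ point. (If this were automatic, the restriction to $\overline{\STAB}(G_1')$ rather than $\STAB(G_1')$ would be unnecessary --- but replacing the former by the latter genuinely enlarges the right-hand side beyond $\STAB(G)$.) Your subsequent gluing check, once integrality is assumed, is correct but vacuous without that step.

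The paper avoids this obstacle entirely by a dual argument: since $R(G)\subseteq\R^{V(G)}_{\ge 0}$, it suffices to show that every nontrivial facet inequality $\sum_v w(v)x_v \le \alpha(G,w)$ of $\STAB(G)$ is valid on $R(G)$. By Lemma~\ref{lem:edge_induced_weights} such $w$ is edge-induced by some $c\ge 0$, and after passing to slack space via $\sigma$ the task becomes proving $\sum_{e\in E(G)} c(e) y_e \ge \beta(G,c)$ on $\sigma(R(G))$. The paper then \emph{constructs} gadget costs $c^H(f):=\gamma(\{f\})$ so that the desired inequality splits as the sum of one inequality valid on $\sigma(\STAB(G_0'))$ and another valid on $\sigma(\overline{\STAB}(G_1'))$. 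Proving the $G_0'$-inequality is where the ``at most one slack edge'' condition enters, via a minimality/replacement argument on the gadget using the subadditivity of $\gamma$ (Lemma~\ref{lem:subadditivity}). None of this is a finite realizability check of slack patterns; it is an LP-duality argument, and your proposal is missing it.
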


Let us first verify that Lemma~\ref{lem:EF} indeed implies Theorem~\ref{thm:xcbound}.

\begin{proof}[Proof of Theorem~\ref{thm:xcbound}]
By Lemma~\ref{lem:EF}, we have
\[
    \xc(\stab(G)) \le \xc(\stab(G_0')) + \xc(\overline{\STAB}(G_1'))\,.
\]
Since gadget $H$ has constant size, $ \overline{\STAB}(G_1') $ is the convex hull of the union of a constant number of faces of $ \stab(G_1') $ in which the coordinates of the nodes in $ H $ are fixed. Hence by Balas' union of polytopes~\cite{Balas79}, we obtain $\xc(\overline{\STAB}(G_1')) = O(\xc(\stab(G_1))) = O(|V(G_1)| + |E(G_1)|)$. Since $|V(G_1)| + |E(G_1)| - \mu_1 \le 6 $ and $\mu_1 \ge 1$, we conclude
\[
    \xc(\overline{\STAB}(G_1')) = O(\mu_1)\,.
\]
This proves the claim. \qed
\end{proof}

In the proof of Lemma~\ref{lem:EF} we will exploit that the facets of stable set polytopes have a special structure, which we describe next.

\subsection{Reducing to edge-induced weights}

We call a weight function $w : V(G) \to \R$ on the nodes of $G$ \emph{edge-induced} if there is a nonnegative cost function $c : E(G) \to \R_{\ge 0}$ such that $w(v) = c(\delta(v))$ for all $v \in V(G)$. For a given node-weighted graph $(G,w)$ we let $\alpha(G,w)$ denote the maximum weight of a stable set.

\begin{lemma} \label{lem:edge_induced_weights}
Let $G = (V,E)$ be a graph without isolated nodes and let $w : V \to \R$ be a weight function. There exists an edge-induced weight function $w' : V \to \R$ such that $w(v) \leqslant w'(v)$ for all nodes $v$ and $\alpha(G,w) = \alpha(G,w')$. In particular, the node weights of every non-trivial facet-defining inequality of $\STAB(G)$ are edge-induced.
\end{lemma}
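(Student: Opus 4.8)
The statement has two parts: (1) given any weight function $w$ on a graph $G$ without isolated nodes, there is an edge-induced $w' \ge w$ with $\alpha(G,w) = \alpha(G,w')$; and (2) the node weights of every non-trivial facet of $\STAB(G)$ are edge-induced. I would first observe that (2) follows from (1): if $w^\top x \le \beta$ is a non-trivial facet-defining inequality, then $w \ge \zerovec$ (since non-trivial facets of $\STAB(G)$ have nonnegative coefficients — a standard fact, provable by noting that if $w_v < 0$ then every stable set $S$ attaining the maximum avoids $v$, so the face lies in $\{x_v = 0\}$ unless the facet is $x_v \ge 0$), and $\beta = \alpha(G,w)$. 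Applying (1), $w' \ge w$ is edge-induced with $\alpha(G,w') = \alpha(G,w) = \beta$; since $w' \ge w$ we have $w'^\top x \le \beta$ is valid, and it is satisfied with equality by every stable set that was tight for $w$ (because for such $S$, $w(S) = \beta = \alpha(G,w') \ge w'(S) \ge w(S) = \beta$, forcing $w'(S) = w(S)$ and $w'(S) = \beta$). Hence $w'^\top x \le \beta$ defines a face containing the given facet, so by maximality the two inequalities are scalar multiples, and $w = w'$ is edge-induced.

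**Proving (1).** The core is an LP-duality / covering argument. Consider the LP relaxation $\max\{w^\top x : x \ge \zerovec,\ x_v + x_u \le 1\ \forall vu \in E\}$; its optimum is at least $\alpha(G,w)$. I would instead work directly with a combinatorial reformulation. Set $\alpha := \alpha(G,w)$. I want nonnegative edge costs $c : E \to \R_{\ge 0}$ so that, writing $w'(v) := c(\delta(v))$, we have $w'(v) \ge w(v)$ for all $v$ and $\alpha(G,w') = \alpha$. The condition $\alpha(G,w') \le \alpha$ amounts to: for every stable set $S$, $\sum_{v \in S} c(\delta(v)) \le \alpha$, i.e. $\sum_{e \in E} c(e)\cdot|\{v \in S : v \in e\}| \le \alpha$; since $S$ is stable, each edge contributes at most one endpoint in $S$, so it suffices that $\sum_{e : e \cap S \ne \emptyset} c(e) \le \alpha$ for every stable $S$ — equivalently $c(\delta(S)) \le \alpha$ where $\delta(S)$ is the set of edges meeting $S$. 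Meanwhile $w'(v) \ge w(v)$ reads $c(\delta(v)) \ge w(v)$. So I need $c \ge \zerovec$ with $c(\delta(v)) \ge w(v)$ for all $v$ and $c(\delta(S)) \le \alpha$ for all stable $S$. Taking $S = \{v\}$ in the second family already gives $c(\delta(v)) \le \alpha$, so together the vertex constraints will in fact hold with the right slack; the real content is producing such a $c$.

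**How to produce $c$.** I would phrase it as an LP feasibility question and dualize. Consider $\min \onevec^\top c$ (or any objective) over $c \ge \zerovec$, $c(\delta(v)) \ge w(v)$ $\forall v$; this is feasible (scale up any positive $c$, using no isolated nodes so $\delta(v) \ne \emptyset$) and one checks that among optimal-type solutions one can keep $c(\delta(S)) \le \alpha$. Cleaner: I would guess that the extreme point to use comes from complementary slackness with the optimal fractional stable set. Let $x^*$ be an optimal vertex of the edge-relaxation LP with value $\alpha^* \ge \alpha$; by the theory of the fractional stable set polytope (Nemhauser–Trotter), $x^*$ is half-integral, and on the $\{0,1\}$ part it behaves like an integral solution while on the $\tfrac12$ part the graph is connected and non-bipartite — but since $\ocp$ plays no role here and the claim is fully general, I should not lean on structure. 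Instead, the safe route: directly set $\alpha^* = \alpha(G,w)$ is false in general (odd-hole gap), so I must actually use that I only need $c(\delta(S)) \le \alpha$ for \emph{stable} $S$, not the LP bound. This is exactly a fractional-covering statement: the function $S \mapsto \max(0, w(S))$ is to be dominated by an edge-cover-type function. I expect the clean proof is: take $c$ to be an optimal solution of $\max \sum_v \lambda_v w(v)$-flavoured dual, or simply observe that $w'(v) := \max(w(v), 0)$ plus a greedy "push excess weight onto edges" works when combined with the observation that $w'(S) \le \sum_{e \in \delta(S)} c(e) \le \alpha$.

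**Main obstacle.** The delicate point — and the step I expect to be the crux — is guaranteeing $c(\delta(S)) \le \alpha$ simultaneously for \emph{all} stable sets $S$ while keeping $c(\delta(v)) \ge w(v)$ for all $v$; naive choices of $c$ (e.g. split $w(v)$ arbitrarily among incident edges) can overshoot on a large stable set. I anticipate the resolution is an LP-duality argument: consider the polytope $R := \{c \ge \zerovec : c(\delta(v)) \ge w_v\ \forall v\}$ and the quantity $f(c) := \max_{S \text{ stable}} c(\delta(S))$; one shows $\min_{c \in R} f(c) \le \alpha(G,w)$ by a minimax / total-dual-integrality argument, using that $\{\delta(S) : S \text{ stable}\}$ and $\{\delta(v)\}$ interact through the edge-node incidence structure, and that any violating configuration would yield a stable set of weight exceeding $\alpha(G,w)$ in the original graph (because from a fractional certificate one extracts, via integrality of the relevant system or a direct rounding using that the extreme optimal $c$ can be taken with connected support inside each "block", an actual stable set $S_0$ with $w(S_0) > \alpha$). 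Once $c$ with $f(c) \le \alpha$ is found, $w' := c(\delta(\cdot))$ satisfies $w' \ge w$, $\alpha(G,w') \le f(c) \le \alpha \le \alpha(G,w) \le \alpha(G,w')$, giving equality, and part (1) is done.
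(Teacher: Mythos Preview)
Your derivation of part (2) from part (1) is clean and correct. The gap is in part (1): you correctly isolate the problem as finding $c \ge \zerovec$ with $c(\delta(v)) \ge w(v)$ for all $v$ and $\alpha(G,w') \le \alpha(G,w)$ for $w'(v) := c(\delta(v))$, and you correctly observe that the LP optimum $\alpha^*$ can strictly exceed $\alpha(G,w)$, so naive duality does not close. But your ``Main obstacle'' paragraph only \emph{anticipates} a minimax/rounding argument; no such argument is actually given, and the sketch (``from a fractional certificate one extracts \ldots\ an actual stable set $S_0$ with $w(S_0) > \alpha$'') is not a proof.

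The missing idea is one you brushed past and then explicitly discarded. Take $y^*$ to be an optimal dual solution, i.e.\ a minimizer of $\sum_e y_e$ over $y \ge \zerovec$, $y(\delta(v)) \ge w(v)$, and set $c := y^*$, $w'(v) := y^*(\delta(v))$. Let $x^*$ be an optimal primal solution and $V_0 := \{v : x^*_v = 0\}$. Complementary slackness gives $w'(v) = w(v)$ for all $v \notin V_0$. Now the pair $(x^*,y^*)$ is still primal/dual optimal for the LP with weights $w'$ (same dual objective, same complementary slackness), so the Nemhauser--Trotter persistence theorem applied to $(G,w')$ yields a maximum $w'$-weight stable set disjoint from $V_0$. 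Hence
\[
\alpha(G,w') = \alpha(G - V_0, w') = \alpha(G - V_0, w) \le \alpha(G,w) \le \alpha(G,w')\,,
\]
and equality holds throughout. This is exactly the paper's proof. You mentioned both complementary slackness and Nemhauser--Trotter, but then said you ``should not lean on structure'' and went looking for a direct minimax argument instead; that detour is where the proposal stalls.
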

\begin{proof} Let $x^*$ denote an optimal solution of the  LP $\max \{\sum_{v \in V} w(v) x_v \mid x_v + x_w \leqslant 1\ \forall vw \in E,\ x \geqslant \mathbf{0}\}$  and $y^*$ be an optimal solution of its dual  $\min \{\sum_{e \in E} y_e \mid y(\delta(v)) \geqslant w(v)\ \forall v \in V,\ y \geqslant \mathbf{0}\}$.
 
Consider the weight function $w'$ such that $w'(v) := y^*(\delta(v))$. Clearly, $w'(v) \ge w(v)$ for all nodes $v$ and  $w'$ is edge-induced.  Consider the above  LPs where $w'$ replaces $w$. Then $x^*$ and $y^*$ remain optimal solutions as they are feasible and satisfy complementary slackness. Moreover the values of the new LPs remain unchanged, as the objective function of the dual is not changed.

Let $V_0 := \{v \in V \mid x^*_v = 0\}$. Since  $w(v) = y^*(\delta(v))$ for all $v \in V \setminus V_0$ by complementary slackness, $w(v)=w'(v)$ for all $v \in V \setminus V_0$.
We have 
$$
\alpha(G,w) \leqslant \alpha(G,w') = \alpha(G - V_0,w') = \alpha(G - V_0, w) \leqslant \alpha(G,w)\,.
$$  
Above, the first inequality follows from $w \leqslant w'$.  The first equality follows from a result of Nemhauser and Trotter \cite{NT74}. Their result implies that $(G,w')$ has a maximum weight stable set disjoint from $V_0$. The second equality follows from the fact that $w(v) = w'(v)$ for all $v \in V \setminus V_0$. Hence, equality holds throughout and $\alpha(G,w) = \alpha(G,w')$. 

Finally, if $\sum_{v \in V} w(v) x_v \le \alpha(G,w)$ induces a non-trivial facet of $\STAB(G)$, there cannot exist $w' \neq w$ such that $w'\ge w$ and $\alpha(G,w')=\alpha(G,w)$. Hence the above argument shows that the node weights of every non-trivial facet-defining inequality of $\STAB(G)$ are edge-induced. 
\qed
\end{proof}

In fact, in~\cite[Propositions 11 and 14]{CFHJW19} it is shown that one can optimize over $\slack(G) = \sigma(\sub(G))$ instead of $\sigma(\STAB(G))$ without changing the optimum. However, we will not need this here.  

For $c : E(G) \to \R_{\ge 0}$, we let 
\begin{equation}
\label{eq:beta_def}
\beta(G,c) := \min \left\{\sum_{e \in E(G)} c(e) y_e \mid y \in \sigma(\STAB(G))\right\}\,.
\end{equation}
Finally, we need the following easy lemma.

\begin{lemma} \label{lem:OPT_STAB_sub_slack}
Let $G = (V,E)$ be a graph. If $w : V(G) \to \R$ is induced by $c : E(G) \to \R_{\ge 0}$, then $\alpha(G,w) = c(E(G))- \beta(G,c)$.
\end{lemma}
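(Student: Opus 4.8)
The statement relates a maximum-weight stable set problem under edge-induced weights to a minimum over $\sigma(\STAB(G))$. The natural approach is to unwind both sides in terms of the underlying stable sets and exploit that the map $\sigma$ takes a stable set's characteristic vector $\chi^S$ to the characteristic vector of its slack edges $\sigma(S)$. Concretely, for a stable set $S$ with $w$ induced by $c$, I want to compute $w(S) = \sum_{v \in S} w(v) = \sum_{v \in S} c(\delta(v))$ and compare it to $c(E(G)) - c(\sigma(S))$.

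**Key steps.** First I would fix a stable set $S$ and observe that since $S$ is stable, every edge $e = vw$ has \emph{at most one} endpoint in $S$. Hence the sets $\delta(v)$ for $v \in S$ are pairwise disjoint, and their union is exactly the set of tight edges (edges with an endpoint in $S$), i.e.\ $E(G) \setminus \sigma(S)$. Therefore $\sum_{v \in S} c(\delta(v)) = c\big(\bigcup_{v \in S} \delta(v)\big) = c(E(G)) - c(\sigma(S))$, which gives $w(S) = c(E(G)) - c(\sigma_G(S))$ for every stable set $S$. Taking the maximum over all stable sets $S$, the right-hand side is maximized precisely when $c(\sigma(S))$ is minimized, so $\alpha(G,w) = c(E(G)) - \min_S c(\sigma_G(S))$. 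Finally, since $\sigma(\STAB(G)) = \sigma(\conv\{\chi^S : S \text{ stable}\}) = \conv\{\chi^{\sigma(S)} : S \text{ stable}\}$ (as $\sigma$ is affine, it commutes with taking convex hulls) and a linear function over a polytope attains its minimum at a vertex, we get $\min\{\sum_e c(e) y_e : y \in \sigma(\STAB(G))\} = \min_S c(\sigma_G(S)) = \beta(G,c)$. Combining the two displays yields $\alpha(G,w) = c(E(G)) - \beta(G,c)$.

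**Main obstacle.** There is no serious obstacle here; the lemma is essentially bookkeeping. The one point that needs a small amount of care is the identity $\bigcup_{v \in S} \delta(v) = E(G) \setminus \sigma(S)$ together with the disjointness of the $\delta(v)$, both of which rely crucially on $S$ being stable (disjointness fails otherwise, and one would double-count edges with both endpoints in $S$). I would also want to state explicitly that the graph has no isolated nodes is not actually needed for this lemma — every edge is accounted for regardless — but the hypothesis is harmless. The only other thing to be slightly careful about is that $\sigma(\STAB(G))$ is a polytope (bounded), so the minimum in \eqref{eq:beta_def} is attained and equals the minimum over its vertices, which are among the $\chi^{\sigma(S)}$; this is immediate since $\STAB(G)$ is a polytope and $\sigma$ is an affine map.
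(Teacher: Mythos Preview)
Your proof is correct and takes essentially the same approach as the paper: both compute $w(S)=\sum_{v\in S} c(\delta(v))=c(E(G))-c(\sigma(S))$ via the fact that the stars $\delta(v)$, $v\in S$, are edge-disjoint with union $E(G)\setminus\sigma(S)$, and then identify $\beta(G,c)$ with $\min_S c(\sigma(S))$. The paper merely phrases the first step as ``$\alpha(G,w)=\max\{c(F):F$ is an edge-disjoint union of stars$\}$'', which is exactly your observation in different words.
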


\begin{proof}
A \emph{star} of $G$ is a set of edges of the form $\delta(v)$, for some $v \in V(G)$.  Note that since $w : V(G) \to \R$ is induced by $c : E(G) \to \R_{\ge 0}$, 
\begin{equation*}
\alpha(G,w)=\max\{c(F) \mid \text{$F$ is the edge-disjoint union of stars}\}= c(E(G))- \beta(G,c). \qed
\end{equation*}

\end{proof}

\subsection{Correctness of the extended formulation}
\label{subsecProofLemEF}

In this section we prove Lemma~\ref{lem:EF}. To this end, let $R(G)$ denote the right-hand side of \eqref{eq:EF}. Notice that for each stable set $S$ of $G$, there exists a stable set $S'$ of $G' := G \cup H$ such that $S' \cap V(G) = S$ and moreover at most one edge of $H$ is slack with respect to $S'$. The inclusion $\STAB(G) \subseteq R(G)$ follows directly from this.

In order to prove the reverse inclusion $R(G) \subseteq \STAB(G)$, first observe that $ R(G) \subseteq \R^{V(G)}_{\ge 0} $. Thus, by Lemma~\ref{lem:edge_induced_weights} it suffices to show that, for all edge-induced node weights $w : V(G) \to \mathbb{R}$, the inequality
\begin{equation}
\label{eq:target_node-space}
\sum_{v \in V(G)} w(v) x_v \leqslant \alpha(G,w)
\end{equation}
is valid for all $ x \in R(G) $. As in Section~\ref{secProjective_planar_case} it will be convenient to work in the edge space instead of the node space. To this end, let $c : E(G) \to \mathbb{R}_+$ be non-negative edge costs, and let $w(v) := c(\delta(v))$ for every node $v$. By Lemma~\ref{lem:OPT_STAB_sub_slack} we see that \eqref{eq:target_node-space} is valid for $ R(G) $ if and only if
\begin{equation}
\label{eq:target}
\sum_{e \in E(G)} c(e) y_e \geqslant \beta(G,c)
\end{equation}
is satisfied by all points $ y \in \sigma(R(G)) $. Our proof strategy to obtain \eqref{eq:target} is to seek additional costs $c^H : E(H) \to \R_{\ge 0}$ such that
\begin{equation}
\label{eq:master}
\sum_{e \in E(G_0)} c(e) y^0_e + \sum_{e \in E(H)} c^H(e) y^H_e \geqslant \beta(G,c)
\end{equation}
is valid for all $(y^0,y^H) \in \sigma(\STAB(G'_0))$ and
\begin{equation}
\label{eq:servant}
\sum_{e \in E(G_1)} c(e) y^1_e - \sum_{e \in E(H)} c^H(e) y^H_e \geqslant 0
\end{equation}
is valid for all $(y^1,y^H) \in \sigma(\overline{\STAB}(G'_1))$.

We claim that this will yield \eqref{eq:target}. Indeed, for every vector $ y = (y^0, y^1) \in \sigma(R(G)) $ there exists a vector $ y^H $ (the image of $ (x^{01}, x^H) $ under $ \sigma_H $) with $ (y^0, y^H) \in \sigma(\stab(G_0')) $ and $ (y^1, y^H) \in \sigma(\overline{\stab}(G'_0))$. This implies that the inequalities in \eqref{eq:master} and \eqref{eq:servant} are satisfied. Now \eqref{eq:target} follows since it is the sum of these two inequalities.

Let us first focus on Inequality~\eqref{eq:servant}.
Independently of how the edge costs $c^H$ are defined, in order to prove that it holds for all $(y^0,y^H) \in \sigma(\overline{\STAB}(G'_1))$, we may assume that $y^H$ is a 0/1-vector with at most one nonzero entry. The general case follows by convexity. For $F \subseteq E(H)$, we let $\chi^F$ be the vector in $\{0,1\}^{E(H)}$ such that $\chi_e^F=1$ if and only if $e \in F$.  Since the case $y^H = \mathbf{0}$ is trivial, assume that $y^H = \chi^{\{f\}}$ for some $f \in E(H)$. Hence \eqref{eq:servant} can be rewritten as
\begin{equation}
\label{eq:servant_bis}
\sum_{e \in E(G_1)} c(e) y^1_e \geqslant c^H(f)\,.
\end{equation}
This suggests the following definition of $c^H$. For $F \subseteq E(H)$, we let
\[
\gamma(F) := \min \left\{c\big(\sigma(S) \cap E(G_1)\big) \mid S \textrm{ stable set of } G'_1,\ \sigma(S) \cap E(H) = F\right\} \in \R_{\ge 0} \cup \{\infty\}\,.
\]
We say that $F$ is \emph{feasible} if $\gamma(F)$ is finite, that is, there exists a stable set $S$ of $G'_1$ such that $\sigma(S) \cap E(H) = F$. Notice that $F := \{f\}$ is feasible for all $f \in E(H)$. By setting $c^H(f) := \gamma(\{f\}) \in \R_{\ge 0}$ for each $f \in E(H)$ we clearly satisfy \eqref{eq:servant_bis}, and hence \eqref{eq:servant} is valid for all $ (y^1,y^H) \in \sigma(\overline{\STAB}(G'_1))$ for this choice of $ c^H $.

It remains to prove that with this choice of $ c^H $ the inequality in \eqref{eq:master} is valid for all $(y^0,y^H) \in \sigma(\STAB(G'_0))$. To this end, we need the following two observations.

\begin{lemma} \label{lem:characterizations_gamma}
Let $G_1$ and $H$ be as in Theorem~\ref{thm:xcbound}, and let $G'_1 := G_1 \cup H$. Hence, $G'_1$ is bipartite. Let $c : E(G_1) \to \R_{\ge 0}$ be nonnegative edge costs. Assume that $F \subseteq E(H)$ is feasible. Letting $x$ and $y = (y^1,y^H)$ denote arbitrary points in $\R^{V(G'_1)}$ and $\R^{E(G'_1)}$ respectively, and letting $M$ denote the incidence matrix of $G'_1$, consider the following LPs:
\begin{align*}
\LP_1(F) &:= \min \left\{\sum_{e \in E(G_1)} c(e) y^1_e \mid Mx + y = \mathbf{1},\ y \geqslant \mathbf{0},\ y^H = \chi^F,\ x \geqslant \mathbf{0}\right\} \quad \textrm{and}\\
\LP_2(F) &:= \min \left\{\sum_{e \in E(G_1)} c(e) y^1_e \mid Mx + y = \mathbf{1},\ y \geqslant \mathbf{0},\ y^H = \chi^F\right\}\,. 
\end{align*}
Then $\gamma(F) = \LP_1(F) = \LP_2(F)$.
\end{lemma}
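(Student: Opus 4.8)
The plan is to establish the chain $\gamma(F) \ge \LP_1(F) \ge \LP_2(F)$ together with $\LP_2(F) \ge \gamma(F)$, so that all three quantities coincide. The inequality $\LP_1(F) \ge \LP_2(F)$ is immediate, since $\LP_2(F)$ is a relaxation of $\LP_1(F)$ obtained by dropping the constraint $x \ge \mathbf{0}$. The inequality $\gamma(F) \ge \LP_1(F)$ is also essentially by definition: if $S$ is a stable set of $G_1'$ with $\sigma(S) \cap E(H) = F$, then $x := \chi^{V(G_1') \setminus S}$... more directly, take $x := \chi^S \in \{0,1\}^{V(G_1')}$ and $y := \mathbf{1} - Mx = \sigma_{G_1'}(S)$; then $x \ge \mathbf{0}$, $y \ge \mathbf{0}$ since $S$ is stable, $y^H = \chi^{\sigma(S) \cap E(H)} = \chi^F$, and the objective value equals $c(\sigma(S) \cap E(G_1)) = c(\sigma(S) \cap E(G_1))$, so $(x,y)$ is feasible for $\LP_1(F)$ with objective $\gamma(F)$-witnessing value. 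Taking the minimum over all such $S$ gives $\gamma(F) \ge \LP_1(F)$.

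The substantive direction is $\LP_2(F) \ge \gamma(F)$, i.e. that an optimal solution of the most relaxed LP can be rounded to an actual stable set of $G_1'$ realizing $F$ on $H$. Here I would use that $G_1'$ is bipartite, so its edge-node incidence matrix $M$ is totally unimodular; moreover fixing $y^H = \chi^F$ just fixes some coordinates, and $\chi^F$ is an integer vector, so the polyhedron $\{(x,y) : Mx + y = \mathbf{1},\ y \ge \mathbf{0},\ y^H = \chi^F\}$ is an integral polyhedron (its defining system, after substituting out $y$, is $M x \le \mathbf{1}$ with equalities on the rows indexed by $E(H)$, and TU-ness is preserved). Hence $\LP_2(F)$ is attained at an integral vertex $(\bar x, \bar y)$. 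Now $\bar x$ is integral but need not lie in $\{0,1\}^{V(G_1')}$; however, from $M\bar x \le \mathbf{1}$ on the $G_1$-edges and $M\bar x = \mathbf{1} - \chi^F$ on the $H$-edges, together with $\bar y \ge \mathbf 0$, I would argue that one may truncate each coordinate $\bar x_v$ into $\{0,1\}$ without increasing the objective: replacing $\bar x_v$ by $\min\{1,\max\{0,\bar x_v\}\}$ keeps $M\bar x + \bar y = \mathbf{1}$ feasible on $E(G_1)$ with $\bar y$ only decreasing (hence $c$-objective not increasing, as $c \ge \mathbf 0$) and — this is the point requiring care — keeps the $H$-edge equalities $y^H = \chi^F$ intact. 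That last point should follow because the endpoints of $H$-edges of the gadget behave suitably: a leaf of the gadget at which the relevant $\bar x$ value would need adjusting forces, via the $H$-equations and $F$ being a fixed $0/1$ pattern, values already in $\{0,1\}$; I expect one needs a small case analysis on the gadget structure here, or alternatively an argument that an optimal $\bar x$ can be taken with $\bar x|_{V(H)} \in \{0,1\}^{V(H)}$ directly because $\chi^F$ pins those values. Once $\bar x \in \{0,1\}^{V(G_1')}$, the set $S := \{v : \bar x_v = 1\}$ is a stable set of $G_1'$ (since $\bar y \ge \mathbf 0$ means $\bar x_v + \bar x_w \le 1$ on every edge) with $\sigma(S) \cap E(H) = F$ and $c(\sigma(S) \cap E(G_1)) \le \LP_2(F)$, giving $\gamma(F) \le \LP_2(F)$.

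The main obstacle I anticipate is exactly the rounding/truncation step: verifying that the $E(H)$-equality constraints $y^H = \chi^F$ survive the coordinatewise truncation of $\bar x$, or equivalently that integrality of the relaxed LP together with feasibility of $F$ forces the gadget vertices to take $0/1$ values in some optimal solution. Since the gadgets are one of the four fixed small graphs $P_3, P_4, S_{2,2,2}, S_{2,3,3}$ and $F$ is a feasible slack-pattern, this is a finite check, but it is the place where the special combinatorics of the gadgets (rather than just bipartiteness of $G_1'$) actually enters; everything else is bookkeeping with total unimodularity and nonnegativity of $c$.
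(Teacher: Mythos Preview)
Your chain $\gamma(F)\ge \LP_1(F)\ge \LP_2(F)$ is fine, and the overall strategy of closing the loop via $\LP_2(F)\ge \gamma(F)$ is the right one. The gap is in the rounding step.

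First, the feasible region of $\LP_2(F)$ is \emph{not pointed}: since $G_1'$ is bipartite, shifting $x$ by $+t$ on one colour class and $-t$ on the other leaves $Mx$ (and hence $y$) unchanged, so the polyhedron has a nontrivial lineality space and no vertices at all. Thus the sentence ``$\LP_2(F)$ is attained at an integral vertex $(\bar x,\bar y)$'' is false as stated. The paper fixes this by choosing a node $v_0\in V(H)$ incident to an edge of $F$ (the case $F=\emptyset$ being trivial), adding the constraint $x_{v_0}=0$, and calling the resulting LP $\LP_3(F)$; one checks $\LP_3(F)=\LP_2(F)$ precisely because the bipartite shift lets you normalise $x_{v_0}$ without changing $y$. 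Now the region \emph{is} pointed, TU gives an integral extreme optimal $(\bar x,\bar y)$, and---this is where feasibility of $F$ is used, not any gadget case analysis---the anchor $x_{v_0}=0$ together with the equalities $y^H=\chi^F\in\{0,1\}^{E(H)}$ propagates along $H$ to force $\bar x_v\in\{0,1\}$ for all $v\in V(H)$.

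Second, your coordinatewise clipping $\bar x_v\mapsto \min\{1,\max\{0,\bar x_v\}\}$ does not preserve the $H$-equalities: if $f=vw\in F$ then $\bar x_v+\bar x_w=0$, and e.g.\ $(\bar x_v,\bar x_w)=(1,-1)$ clips to $(1,0)$, which violates $y^H_f=1$. The paper avoids any clipping: once $\bar x|_{V(H)}\in\{0,1\}$, it shows $\bar x\ge \mathbf 0$ on the remaining nodes by an extremality argument---if some $\bar x_v=\alpha<0$, one finds disjoint sets $V_\alpha,V_{1-\alpha}$ and two opposite integer directions $\pm(\chi^{V_\alpha}-\chi^{V_{1-\alpha}})$ keeping feasibility, contradicting that $(\bar x,\bar y)$ is extreme. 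This yields $(\bar x,\bar y)$ feasible for $\LP_1(F)$, hence $\LP_1(F)\le \LP_3(F)=\LP_2(F)$ and the chain closes. So the missing idea is the anchoring to kill the lineality, after which extremality (not truncation) does the work; no gadget-by-gadget check is needed.
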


\begin{proof}
That $\gamma(F) = \LP_1(F)$ follows directly from the fact that $G'_1$ is bipartite. Furthermore, it is clear that $\LP_1(F) \geqslant \LP_2(F)$. If $F$ is empty, then $\LP_1(F) = \LP_2(F) = 0$ since $(x, y) :=
(\frac{1}{2} \mathbf{1}, \mathbf{0})$ is optimal for both LPs. From now on, assume that $F$ is nonempty, and let $v_0 \in V(H)$ be any node that is incident to some edge of $F$. 

Now consider the LP obtained from $\LP_3(F)$ by adding the constraint $x_{v_0} = 0$:
$$
\LP_3(F) := \min \left\{\sum_{e \in E(G_1)} c(e) y^1_e \mid Mx + y = \mathbf{1},\ y \geqslant \mathbf{0},\ y^H = \chi^F,\ x_{v_0} = 0\right\}\,. 
$$
Since $G'_1$ is bipartite, $\LP_2(F) = \LP_3(F)$ since adding the extra constraint does not change the set of feasible $y$ vectors. Thanks to the extra constraint, the feasible region of $\LP_3(F)$ is pointed.

Consider an extreme optimal solution $(\bar{x},\bar{y})$ of $\LP_3(F)$. Since $M$ is totally unimodular, we may assume that both $\bar{x}$ and $\bar{y}$ are integral. Since $F$ is feasible, $\bar{x}_v \in \{0,1\}$ for all $v \in V(H)$. We claim that $(\bar{x},\bar{y})$ is feasible for $\LP_1(F)$. Observe that the claim implies $\LP_1(F) \leqslant \LP_3(F) = \LP_2(F)$ and thus $\LP_1(F) = \LP_2(F)$. 

If $\bar{x}$ is nonnegative, we are done. Otherwise, we can find disjoint sets $V_{\alpha}$ and $V_{1-\alpha}$ for some $\alpha \in \Z_{<0}$ such that $\bar{x}_v = \alpha$ for all $v \in V_{\alpha}$, $\bar{x}_v = 1 - \alpha$ for all $v \in V_{1-\alpha}$ and no edge $e$ with $y_e = 0$ has exactly one end in $V_{\alpha} \cup V_{1-\alpha}$. Since $\alpha < 0$ and $1 - \alpha > 1$, we see that both $V_{\alpha}$ and $V_{1-\alpha}$ are disjoint from $V(H)$. Let $\bar{x}' := \bar{x} + \chi^{V_{\alpha}} - \chi^{V_{1-\alpha}}$, $\bar{y}' := \mathbf{1} - M\bar{x}'$, $\bar{x}'' := \bar{x} - \chi^{V_{\alpha}} + \chi^{V_{1-\alpha}}$ and $\bar{y}'' := \mathbf{1} - M\bar{x}''$. Both $(\bar{x}',\bar{y}')$ and $(\bar{x}'',\bar{y}'')$ are feasible for $\LP_3(F)$, contradicting the extremality of $(\bar{x},\bar{y})$. \qed
\end{proof}

\begin{lemma} \label{lem:subadditivity}
If $F \subseteq E(H)$ is feasible and the disjoint union of $A$ and $B$, then $\gamma(F) \leqslant \gamma(A) + \gamma(B)$.
\end{lemma}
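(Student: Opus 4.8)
The plan is to use the LP characterization of $\gamma$ from Lemma~\ref{lem:characterizations_gamma} — specifically the equality $\gamma(F) = \LP_2(F)$, which drops the nonnegativity constraint $x \geqslant \zerovec$ and keeps only $Mx + y = \onevec$, $y \geqslant \zerovec$, $y^H = \chi^F$. The point of working with $\LP_2$ is that its feasible region is an affine translate of a linear space intersected with $\{y \geqslant \zerovec\}$, so feasible solutions combine well: if $(x_A, y_A)$ is feasible for $\LP_2(A)$ and $(x_B, y_B)$ is feasible for $\LP_2(B)$, then I want to produce a feasible solution for $\LP_2(F)$ whose cost is at most the sum of the two costs.

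First I would fix an optimal solution $(x_A, y_A)$ of $\LP_2(A)$ and an optimal solution $(x_B, y_B)$ of $\LP_2(B)$, so that $\sum_{e \in E(G_1)} c(e)(y_A)^1_e = \gamma(A)$ and likewise for $B$. I would then set $x := x_A + x_B - \tfrac12 \onevec$ and $y := \onevec - Mx$. The key computations are: (i) $Mx + y = \onevec$ by definition; (ii) on the edges of $H$, since $M(\tfrac12\onevec) = \onevec$ on $E(H)$ (every edge of $H$ has two ends), we get $y^H = \onevec - M x_A - M x_B + M(\tfrac12 \onevec) = y_A^H + y_B^H - \onevec = \chi^A + \chi^B - \onevec$ restricted to $E(H)$; because $A$ and $B$ are \emph{disjoint}, $\chi^A + \chi^B = \chi^{A \cup B} = \chi^F$ on $E(H)$, so $y^H = \chi^F$, as required; and (iii) on every edge $e = vw \in E(G_1)$, $y_e = 1 - x_v - x_w = (1 - (x_A)_v - (x_A)_w) + (1 - (x_B)_v - (x_B)_w) - 1 \cdot(\text{correction})$; more carefully $y_e = (y_A)_e + (y_B)_e - (1 - M(\tfrac12\onevec))_e = (y_A)_e + (y_B)_e$ since $(M \tfrac12 \onevec)_e = 1$. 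Hence $y^1 = y_A^1 + y_B^1 \geqslant \zerovec$, so $(x,y)$ is feasible for $\LP_2(F)$, and its cost is $\sum_e c(e) y^1_e = \gamma(A) + \gamma(B)$. Therefore $\gamma(F) = \LP_2(F) \leqslant \gamma(A) + \gamma(B)$. (The hypothesis that $F$ is feasible is only needed to ensure $\gamma(F)$ is finite, i.e.\ that the statement is non-vacuous; actually once we exhibit the feasible $(x,y)$ above it follows that $F$ is feasible automatically.)

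The only subtle point — and the one I would double-check carefully — is that the cancellation on $E(H)$ really does use disjointness of $A$ and $B$ rather than just $F = A \cup B$: if $A$ and $B$ overlapped, $\chi^A + \chi^B$ would take the value $2$ on $A \cap B$ and we would not land on $\chi^F$. Everything else is the routine observation that $\tfrac12\onevec$ is in the affine fiber over the all-slack vector, so translating by the difference of two feasible points of $\LP_2$ stays in the fiber; dropping $x \geqslant \zerovec$ in passing to $\LP_2$ is exactly what makes this translation legal, since $x_A + x_B - \tfrac12\onevec$ may well have negative coordinates on $V(G_1) \setminus V(H)$. No facts beyond Lemma~\ref{lem:characterizations_gamma} are needed.
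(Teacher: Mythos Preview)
Your approach is essentially identical to the paper's: both take optimal solutions of $\LP_2(A)$ and $\LP_2(B)$ and combine them via $x := x_A + x_B - \tfrac12\onevec$, $y := y_A + y_B$ to get a feasible solution of $\LP_2(F)$ with cost $\gamma(A)+\gamma(B)$. One arithmetic slip to fix in your step~(ii): since $M(\tfrac12\onevec)=\onevec$, the simplification gives $y^H = y_A^H + y_B^H = \chi^A + \chi^B = \chi^F$, not $y_A^H + y_B^H - \onevec$ (you compute this correctly in~(iii)).
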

\begin{proof}
We may assume that $A$ and $B$ are both feasible, otherwise there is nothing to prove. Let $(x,y)$ and $(z,t)$ be optimal solutions of $\LP_2(A)$ and $\LP_2(B)$ respectively (see~Lemma~\ref{lem:characterizations_gamma}). If we let $u := x + z - \frac{1}{2} \mathbf{1}$ and $v := y + t$, then $(u,v)$ is feasible for $\LP_2(F)$ since
$$
Mu + v = Mx + Mz - \frac{1}{2} M\mathbf{1} + v = (\mathbf{1} - y) + (\mathbf{1} - t) - \mathbf{1} + v = \mathbf{1}\,,
$$
$v \geqslant \mathbf{0}$ and $\chi^A + \chi^B = \chi^F$. By Lemma~\ref{lem:characterizations_gamma}, this shows that $\gamma(F) \leqslant \gamma(A) + \gamma(B)$.
\qed
\end{proof}

To prove that the inequality in \eqref{eq:master} is valid for all $(y^0,y^H) \in \sigma(\STAB(G'_0))$, it suffices to consider any vertex $(y^0,y^H)$ of $\sigma(\STAB(G'_0))$ minimizing the left-hand size of~\eqref{eq:master}. We may even assume that $(y^0,y^H)$ minimizes $||y^H||_1$ among all such vertices.

Let $S^0$ denote the stable set of $G'_0$ corresponding to $(y^0,y^H)$ and let $F := \sigma(S^0) \cap E(H)$. Note that $y^H = \chi^F$. Observe that $ S^0 $ is not properly contained in another stable set, since this would contradict the minimality of $ y $. Moreover, we claim that $F$ has at most one edge. In order to prove the claim, we consider only the case where $H = S_{2,2,2}$, see Figure~\ref{fig:gadgets}. The other cases are easier or similar, and we leave the details to the reader.

Let us assume that $ F $ contains at least two edges, that is, $ \|y^H\|_1 \ge 2 $. We will replace $y^H$ by a new vector $\bar{y}^H \in \{0,1\}^{E(H)}$ such that $(y^0,\bar{y}^H) \in \sigma(\STAB(G'_0))$ with smaller $\ell_1$-norm in such a way that the cost of $(y^0,\bar{y}^H)$ is not higher than that of $(y^0,y^H)$, arriving at a contradiction. In order to prove that $(y^0,\bar{y}^H) \in \sigma(\STAB(G'_0))$ we will explain how to obtain the corresponding stable set $\bar{S}^0$ from stable set $S^0$ in each case. To guarantee that the cost of $(y^0,\bar{y}^H)$ does not exceed that of $(y^0,y^H)$, we will mainly rely on Lemma~\ref{lem:subadditivity}.

To distinguish the different cases, let $v_1$, $v_2$ and $v_3$ denote the leaves of $H$ and $v_0$ denote its degree-$3$ node. For $i, j \in \{0,1,2,3\}$ we let $P_{ij}$ denote the $v_i$--$v_j$ path in $H$. For $i \in [3]$, let $v_{0i}$ denote the middle vertex of $P_{ij}$ and let $e_{i}$ and $f_{i}$ denote the edges of the path $P_{0i}$ incident to $v_i$ and $v_0$ respectively. The relevant cases and the replacements are listed in Figure~\ref{fig:replacements}. We treat each of them below. Notice that the case $|S^0 \cap \{v_1,v_2,v_3\}| = 3$ cannot arise since this would contradict the maximality of $S^0$.\medskip

\noindent \emph{Case 1: $ |S^0 \cap \{v_1,v_2,v_3\}| = 0 $.} In this case we set $ \bar{y}^H := \zerovec $, which corresponds to letting $ \bar{S}^0 := (S^0 \cup \{v_{01}, v_{02}, v_{03}\}) \setminus \{v_0\} $. In this case it is clear that the cost of $(y^0,\bar{y}^H)$ is at most the cost of $(y^0,y^H)$.\smallskip

\noindent \emph{Case 2: $ |S^0 \cap \{v_1,v_2,v_3\}| = 1 $.} We may assume that $ S^0 \cap \{v_1,v_2,v_3\} = \{v_3\} $. Since $ |F| \ge 2 $ and $ S^0 $ is maximal, we must have $ S^0 \cap V(H) = \{ v_0, v_3 \} $ and hence $ y^H = \chi^{\{e_1, e_2\}} $. 

We let $\bar{y}^H := \chi^{\{f_3\}}$, which corresponds to letting $\bar{S}^0 := S^0 \setminus \{v_0\} \cup \{v_{01},v_{02}\}$. The cost of $(y^0,\bar{y}^H)$ equals the cost of $(y^0,y^H)$ minus $\gamma(\{e_1\}) + \gamma(\{e_2\}) - \gamma(\{f_3\}) = \gamma(\{e_1\}) + \gamma(\{e_2\}) - \gamma(\{e_1,e_2\}) \geqslant 0$. The equality follows from the fact that stable sets $S$ of $G'_1$ such that $\sigma(S) \cap E(H) = \{f_3\}$ and stable sets $S$ of $G'_1$ such that $\sigma(S) \cap E(H) = \{e_1,e_2\}$ have the same intersection with the leaves of $H$. The inequality follows from Lemma~\ref{lem:subadditivity}.\smallskip

\noindent \emph{Case 3: $ |S^0 \cap \{v_1,v_2,v_3\}| = 2 $.} We may assume that $ S^0 \cap \{v_1,v_2,v_3\} = \{v_1,v_2\} $. Again, since $ |F| \ge 2 $ and $ S^0 $ is maximal, we must have $ S^0 \cap V(H) = \{ v_1,v_2,v_{03} \} $ and hence $ y^H = \chi^{\{f_1, f_2\}} $. We let $\bar{y}^H := \chi^{\{e_{3}\}}$, which corresponds to letting $\bar{S}^0 := S^0 \setminus \{v_{03}\} \cup \{v_{01},v_{02}\}$. Similar to the previous case, we obtain that the cost of $(y^0,\bar{y}^H)$ equals the cost of $(y^0,y^H)$ minus $\gamma(\{f_1\}) + \gamma(\{f_2\}) - \gamma(\{e_3\}) = \gamma(\{f_1\}) + \gamma(\{f_2\}) - \gamma(\{f_1,f_2\}) \geqslant 0$. \smallskip

Thus, $ F $ has indeed at most one edge. There exists a stable set $S^1$ of $G'_1$ that is a minimizer for $\gamma(F)$ such that $S^1 \cap V(G) \cap V(H) = S^0 \cap V(G) \cap V(H)$. Hence, $S := S^1 \cup S^0$ is a stable set of $G$. Let $(y^0,y^1)$ denote the characteristic vector of $\sigma(S)$, so that $(y^0,y^1) \in \sigma(\STAB(G))$. We get
\begin{align*}
\sum_{e \in E(G_0)} c(e) y^0_e + \sum_{e \in E(H)} c^H(e) y^H_e
& = \sum_{e \in E(G_0)} c(e)y^0_e + \gamma(F) \\
& = \sum_{e \in E(G_0)} c(e)y^0_e + \sum_{e \in E(G_1)} c(e)y^1_e
\geqslant \beta(G,c)\,.
\end{align*}
Above, the first equality comes from the fact that $F$ has at most one edge, the definition of $c^H(f)$ for $f \in E(H)$ and $\gamma(\emptyset) = 0$. The second equality follows from the hypothesis that $S^1$ is a minimizer for $\gamma(F)$. Finally, the inequality is due to the validity of~\eqref{eq:target} for $\sigma(\STAB(G))$. This shows that \eqref{eq:master} is indeed valid for $(y^0,y^H) \in \sigma(\STAB(G'_0))$, which concludes the proof of Lemma~\ref{lem:EF}.

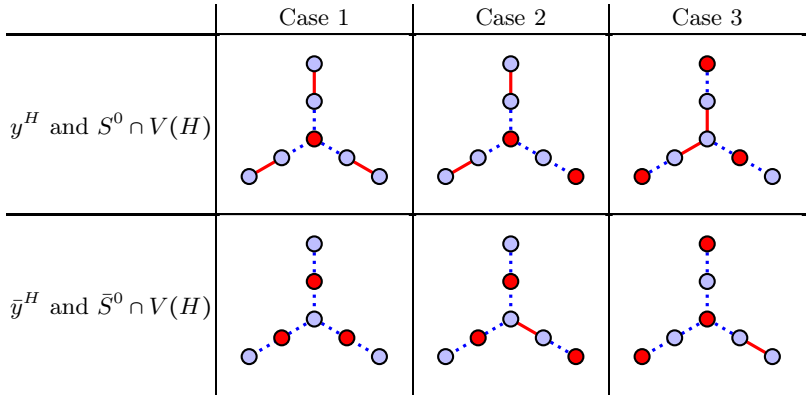
\begin{figure}[ht] \centering
\begin{tabular}{r|c|c|c}
& Case 1& Case 2& Case 3\\
\hline
\mbox{}\raise7ex\hbox{$y^H$ and $S^0 \cap V(H)$}\quad
%Case 2.2 (before)
&\begin{tikzpicture}[inner sep=2pt,scale=.5]
\tikzstyle{vtx}=[circle,draw,thick,fill=black!5]
\tikzstyle{vtx0}=[circle,draw,thick,fill=blue!25!white]
\tikzstyle{vtx1}=[circle,draw,thick,fill=red]
\draw[white] (-2.5,2.75) rectangle (2.5,-1.75);
\coordinate (v0) at (0:0);
\coordinate (v1) at (90:2);
\coordinate (v2) at (210:2);
\coordinate (v3) at (330:2);
\coordinate (v01) at ($ (v0) !.5! (v1) $);
\coordinate (v02) at ($ (v0) !.5! (v2) $);
\coordinate (v03) at ($ (v0) !.5! (v3) $);
\draw[very thick, red] (v01) -- (v1);
\draw[very thick, red] (v02) -- (v2);
\draw[very thick, red] (v03) -- (v3);
\draw[very thick, blue, dotted] (v0) -- (v03) (v01) -- (v0) -- (v02);
\node[vtx1] at (v0) {};
\node[vtx0] at (v1) {};
\node[vtx0] at (v2) {};
\node[vtx0] at (v3) {};
\node[vtx0] at (v01) {};
\node[vtx0] at (v02) {};
\node[vtx0] at (v03) {};
\end{tikzpicture}
%Case 2.1 (before)
&\begin{tikzpicture}[inner sep=2pt,scale=.5]
\tikzstyle{vtx}=[circle,draw,thick,fill=black!5]
\tikzstyle{vtx0}=[circle,draw,thick,fill=blue!25!white]
\tikzstyle{vtx1}=[circle,draw,thick,fill=red]
\draw[white] (-2.5,2.75) rectangle (2.5,-1.75);
\coordinate (v0) at (0:0);
\coordinate (v1) at (90:2);
\coordinate (v2) at (210:2);
\coordinate (v3) at (330:2);
\coordinate (v01) at ($ (v0) !.5! (v1) $);
\coordinate (v02) at ($ (v0) !.5! (v2) $);
\coordinate (v03) at ($ (v0) !.5! (v3) $);
\draw[very thick, red] (v01) -- (v1);
\draw[very thick, red] (v02) -- (v2);
\draw[very thick, blue, dotted] (v0) -- (v3) (v01) -- (v0) -- (v02);
\node[vtx1] at (v0) {};
\node[vtx0] at (v1) {};
\node[vtx0] at (v2) {};
\node[vtx1] at (v3) {};
\node[vtx0] at (v01) {};
\node[vtx0] at (v02) {};
\node[vtx0] at (v03) {};
\end{tikzpicture}
%Case 3.1 (before)
&\begin{tikzpicture}[inner sep=2pt,scale=.5]
\tikzstyle{vtx}=[circle,draw,thick,fill=black!5]
\tikzstyle{vtx0}=[circle,draw,thick,fill=blue!25!white]
\tikzstyle{vtx1}=[circle,draw,thick,fill=red]
\draw[white] (-2.5,2.75) rectangle (2.5,-1.75);
\coordinate (v0) at (0:0);
\coordinate (v1) at (90:2);
\coordinate (v2) at (210:2);
\coordinate (v3) at (330:2);
\coordinate (v01) at ($ (v0) !.5! (v1) $);
\coordinate (v02) at ($ (v0) !.5! (v2) $);
\coordinate (v03) at ($ (v0) !.5! (v3) $);
\draw[very thick, red] (v01) -- (v0) -- (v02);
\draw[very thick, blue, dotted] (v1) -- (v01) (v2) -- (v02) (v0) -- (v3);
\node[vtx0] at (v0) {};
\node[vtx1] at (v1) {};
\node[vtx1] at (v2) {};
\node[vtx0] at (v3) {};
\node[vtx0] at (v01) {};
\node[vtx0] at (v02) {};
\node[vtx1] at (v03) {};
\end{tikzpicture}
\\
\hline
\mbox{}\raise7ex\hbox{$\bar{y}^H$ and $\bar{S}^0 \cap V(H)$}\quad
%Case 2.2 (after)
&\begin{tikzpicture}[inner sep=2pt,scale=.5]
\tikzstyle{vtx}=[circle,draw,thick,fill=black!5]
\tikzstyle{vtx0}=[circle,draw,thick,fill=blue!25!white]
\tikzstyle{vtx1}=[circle,draw,thick,fill=red]
\draw[white] (-2.5,2.75) rectangle (2.5,-1.75);
\coordinate (v0) at (0:0);
\coordinate (v1) at (90:2);
\coordinate (v2) at (210:2);
\coordinate (v3) at (330:2);
\coordinate (v01) at ($ (v0) !.5! (v1) $);
\coordinate (v02) at ($ (v0) !.5! (v2) $);
\coordinate (v03) at ($ (v0) !.5! (v3) $);
\draw[very thick, blue, dotted] (v1) -- (v0) -- (v2) (v0) -- (v3);
\node[vtx0] at (v0) {};
\node[vtx0] at (v1) {};
\node[vtx0] at (v2) {};
\node[vtx0] at (v3) {};
\node[vtx1] at (v01) {};
\node[vtx1] at (v02) {};
\node[vtx1] at (v03) {};
\end{tikzpicture}
&%Case 2.1 (after)
\begin{tikzpicture}[inner sep=2pt,scale=.5]
\tikzstyle{vtx}=[circle,draw,thick,fill=black!5]
\tikzstyle{vtx0}=[circle,draw,thick,fill=blue!25!white]
\tikzstyle{vtx1}=[circle,draw,thick,fill=red]
\draw[white] (-2.5,2.75) rectangle (2.5,-1.75);
\coordinate (v0) at (0:0);
\coordinate (v1) at (90:2);
\coordinate (v2) at (210:2);
\coordinate (v3) at (330:2);
\coordinate (v01) at ($ (v0) !.5! (v1) $);
\coordinate (v02) at ($ (v0) !.5! (v2) $);
\coordinate (v03) at ($ (v0) !.5! (v3) $);
\draw[very thick, red] (v0) -- (v03);
\draw[very thick, blue, dotted] (v1) -- (v0) -- (v2) (v03) -- (v3);
\node[vtx0] at (v0) {};
\node[vtx0] at (v1) {};
\node[vtx0] at (v2) {};
\node[vtx1] at (v3) {};
\node[vtx1] at (v01) {};
\node[vtx1] at (v02) {};
\node[vtx0] at (v03) {};
\end{tikzpicture}
%Case 3.1 (after) 
&\begin{tikzpicture}[inner sep=2pt,scale=.5]
\tikzstyle{vtx}=[circle,draw,thick,fill=black!5]
\tikzstyle{vtx0}=[circle,draw,thick,fill=blue!25!white]
\tikzstyle{vtx1}=[circle,draw,thick,fill=red]
\draw[white] (-2.5,2.75) rectangle (2.5,-1.75);
\coordinate (v0) at (0:0);
\coordinate (v1) at (90:2);
\coordinate (v2) at (210:2);
\coordinate (v3) at (330:2);
\coordinate (v01) at ($ (v0) !.5! (v1) $);
\coordinate (v02) at ($ (v0) !.5! (v2) $);
\coordinate (v03) at ($ (v0) !.5! (v3) $);
\draw[very thick, red] (v03) -- (v3);
\draw[very thick, blue, dotted] (v03) -- (v0) (v1) -- (v0) -- (v2);
\node[vtx1] at (v0) {};
\node[vtx1] at (v1) {};
\node[vtx1] at (v2) {};
\node[vtx0] at (v3) {};
\node[vtx0] at (v01) {};
\node[vtx0] at (v02) {};
\node[vtx0] at (v03) {};
\end{tikzpicture}
%Case 3.2 (after)
% &\begin{tikzpicture}[inner sep=2pt,scale=.5]
% \tikzstyle{vtx}=[circle,draw,thick,fill=black!5]
% \tikzstyle{vtx0}=[circle,draw,thick,fill=blue!25!white]
% \tikzstyle{vtx1}=[circle,draw,thick,fill=red]
% \draw[white] (-2.5,2.75) rectangle (2.5,-1.75);
% \coordinate (v0) at (0:0);
% \coordinate (v1) at (90:2);
% \coordinate (v2) at (210:2);
% \coordinate (v3) at (330:2);
% \coordinate (v01) at ($ (v0) !.5! (v1) $);
% \coordinate (v02) at ($ (v0) !.5! (v2) $);
% \coordinate (v03) at ($ (v0) !.5! (v3) $);
% \draw[very thick, blue, dotted] (v1) -- (v0) -- (v2) (v3) -- (v0) ;
% \node[vtx1] at (v0) {};
% \node[vtx1] at (v1) {};
% \node[vtx1] at (v2) {};
% \node[vtx1] at (v3) {};
% \node[vtx0] at (v01) {};
% \node[vtx0] at (v02) {};
% \node[vtx0] at (v03) {};
% \end{tikzpicture}
\end{tabular}

\caption{Replacements in the proof of Lemma~\ref{lem:EF} (top row: before, bottom row: after). Red thick edges are slack. Blue thick, dotted edges are tight. Red nodes are in the stable set, blue nodes are not. \label{fig:replacements}}
\end{figure}

\section*{Acknowledgements}

This paper was supported by ERC Consolidator Grant 615640-ForEFront. We would like to thank the IPCO reviewers for their comments. Also, we would like to mention that the total unimodularity test of Truemper \& Walter~\cite{WalterTruemper} provided valuable insights in the early stages of our project, especially for gaining a better understanding of the algorithm in~\cite{AWZ17}.

\bibliography{references}{}
\bibliographystyle{splncs04}

\end{document}